\newtheorem{theorem}{Theorem}
\newtheorem{lemma}{Lemma}
\newtheorem{corollary}{Corollary}
\newtheorem{definition}{Definition}
\newtheorem{fact}{Fact}
\newcommand{\opt}{\ensuremath{\text{\footnotesize\textsf{OPT}}}}
\newcommand{\cost}{\ensuremath{\text{\footnotesize\textsf{COST}}}}
\newcommand{\E}{\mathcal{E}}
\title{Online Facility Location on Semi-Random Streams}
\author{Harry Lang\thanks{Inria Saclay and Johns Hopkins University.}}
\date{}
\begin{document}

\maketitle
\thispagestyle{empty}

\begin{abstract}
In the streaming model, the order of the stream can significantly affect the difficulty of a problem.
A $t$-semirandom stream was introduced as an interpolation between random-order ($t=1$) and adversarial-order ($t=n$) streams where an adversary intercepts a random-order stream and can delay up to $t$ elements at a time.
IITK Sublinear Open Problem \#15 asks to find algorithms whose performance degrades smoothly as $t$ increases.
We show that the celebrated online facility location algorithm achieves an expected competitive ratio of $O(\frac{\log t}{\log \log t})$.
We present a matching lower bound that any randomized algorithm has an expected competitive ratio of $\Omega(\frac{\log t}{\log \log t})$.

We use this result to construct an $O(1)$-approximate streaming algorithm for $k$-median clustering that stores $O(k \log t)$ points and has $O(k \log t)$ worst-case update time.
Our technique generalizes to any dissimilarity measure that satisfies a weak triangle inequality, including $k$-means, $M$-estimators, and $\ell_p$ norms.
The special case $t=1$ yields an optimal $O(k)$ space algorithm for random-order streams as well as an optimal $O(nk)$ time algorithm in the RAM model, closing a long line of research on this problem.
\end{abstract}

\newpage
\setcounter{page}{1}

\section{Introduction}

One of the fundamental theoretical questions in the streaming model is to understand how the stream order impacts computation.
In the adversarial-order model, results must hold under any order, whereas in the random-order model the order is selected uniformly at random.
The order of the stream can strongly affect the resources required to solve a problem.
%In the streaming model, the input arrives sequentially, and data that is not explicitly stored is lost forever.
%Because of direct applications to massive data sets and real-time data streams, the streaming model is the subject of a large body of research.
%For some problems a small number of difficult orders require significantly more computational resources.
For example, for streams of $n$ integers where the stream may be read sequentially multiple times, determining the median using polylogarithmic space requires $\Theta(\frac{\log n}{\log \log n})$ passes in adversarial-order~\cite{MP78, GM07a} but only $\Theta(\log \log n)$ passes in random-order~\cite{GM06, CJP08}.

As demonstrated by the median problem, there can be an exponential gap in the resources required for random-order and adversarial-order streams.
To interpolate between these two extremes, Guha and McGregor introduced two notions of \textit{semirandom-order} where the adversary has limited power.

\begin{definition}[$t$-semirandom order,~\cite{GM06}] \label{def:semirandom}
A $t$-bounded adversary\footnote{Our definition of $t$-bounded actually corresponds to $(t-1)$-bounded as introduced in~\cite{GM06}.  This turns out to be more natural and avoids writing $t+1$ in all our bounds.} can permute a stream $p_1, \ldots, p_n$ to the stream $p_{\sigma(1)}, \ldots, p_{\sigma(n)}$ with any permutation $\sigma$ that satisfies $|\{j \in [n] : j < i \text{ and } \sigma(j) > \sigma(i)\}| < t$ for every $i \in [n]$.
A stream is in $t$-semirandom order if it is generated by a $t$-bounded adversary acting on a random-order stream.
\end{definition}

\begin{definition}[$\epsilon$-generated random order,~\cite{GM07a}]
Let $\mu$ be the uniform distribution over all permutations of $n$ elements.
A stream of $n$ points arrives in $\epsilon$-generated random-order if the permutation is drawn from a distribution $\nu$ such that $|| \mu - \nu ||_1 \le 2 \epsilon$.
\end{definition}

\noindent
These models capture the notion of an adversary with limited power and establish a spectrum of semirandom orders to intermediate between %semirandom stream orders between 
the fully random and fully adversarial cases.
%that gradually transforms between the random-order and adversarial-order models. 
IITK Sublinear Open Problem \#15~\cite{sublinear15} asks:
\begin{quotation}
\noindent
How do these notions relate to each other? Can we develop algorithms whose performance degrades smoothly as the stream ordering becomes ``less-random" using either definition? For a given application, which notion is more appropriate?
\end{quotation}

\noindent
We respond to the first question by showing that no non-trivial relations hold between these models.
One can verify that $\epsilon = 0$ and $t = 0$ correspond to random-order, and that $\epsilon = 1$ and $t = n$ correspond to adversarial-order.
However, we show that these models are incomparable in the sense that an $\epsilon$-generated adversary requires $\epsilon > 1 - 2^{-\Omega(n)}$ to simulate the action of a $t$-bounded adversary for any $t > 1$, and that a $t$-bounded adversary requires $t > n/2$ to simulate the action of an $\epsilon$-generated adversary for any $\epsilon \ge 2^{-\Omega(n)}$.

We answer the second question by proving matching bounds for the online facility location problem that show the performance degrades smoothly as $t$ increases.
These are the first bounds for semirandom streams  %and the first upper bound to provide a non-trivial guarantee 
that match at all values of $t$.
Previous results matched only for $t$ sufficiently small.
%degraded too rapidly to give useful bounds when $t$ is sufficiently large.
For example, the result of~\cite{GM06} shows how to return the median in $O(\log \log n)$ passes when $t = O(\sqrt{n})$.
However, %the bound degrades rapidly and 
for a constant $c < 1$ the algorithm is only guaranteed to terminate in $O(n)$ passes when $t = \Omega(n^c)$.
In comparison, even at $t = n$ there are polylogarithmic-space algorithms that return the median in only $O(\log n / \log \log n)$ passes~\cite{MP78}.

Our results provide evidence that $t$-bounded adversarial-order is a viable model of semirandomness.
We address the third question by complementing our positive results for the $t$-semirandom model with an argument showing that the $\epsilon$-generated random order model is uninteresting for a wide class of problems.
A more complete discussion of IITK Open Question \#15 is included in Section~\ref{section:open}.

%IITK Sublinear Open Problem \#15 also asks what is the correct model for semirandomness.
%Our result for online facility location that provides matching bounds that smoothly interpolate between the adversarial and random order models,

\subsection{Our Contributions}

We present results for online facility location and a large class of clustering problems.
In Section~\ref{section:ofl}, we provide a novel analysis that the online facility location algorithm of~\cite{M01} is $O(\frac{\log t}{\log \log t})$-competitive in expectation.
Adapting Meyerson's original argument for a $t$-bounded adversary is possible but results in an $O(t)$ expected competitive ratio.
We introduce a different analysis that permits this exponential improvement.
We complement this result by presenting a matching lower bound in Section~\ref{section:lower} that any randomized algorithm for online facility location is $\Omega(\frac{\log t}{\log \log t})$-competitive in expectation.  
See Table~\ref{table:ofl} for a comparison with existing results.

In Section~\ref{section:clustering}, we present a streaming algorithm for clustering using any 
function that satisfies a weak triangle inequality (this includes $k$-median, $k$-means, $M$-estimators, and $\ell_p$ norms).
Our algorithm stores $O(k \log t)$ points and has $O(k \log t)$ worst-case update time. 
As shown in Table~\ref{table:clustering}, we match the state-of-the-art for adversarial-order streams and provide the first results for all $t < n$.
We remark that our algorithm respects sparsity by only storing a weighted subset of the input.
Another notable property of our clustering algorithm is that it is oblivious to the actual values of $k$ and $t$.
The algorithm takes an input value $m$, and the output is valid as long as $m = \Omega(k \log t)$.
This may be useful for practical applications where the number of clusters or power of the adversary is unknown.
For example, if the data exhibits a hierarchical structure than the resolution of the result (measured by $k$) degrades smoothly as the power of the adversary increases.

The special case $t=1$ yields the first result for clustering on random-order streams.
In the \texttt{RAM} model where we can shuffle the input into random order in linear time, this implies an optimal $O(nk)$ time algorithm and closes a long line of research on the problem.

As a blackbox used by our clustering algorithm, we present a method to compress a weighted set of $n$ distinct points to a weighted set of $\frac{n+k}{2}$ distinct points in linear time while incurring less than twice the optimal cost of clustering to $k$ points.
% We present this algorithm in Section~\ref{section:compress} as it may be interesting in its own right.
Our algorithm, based on $2$-coloring a nearest neighbor graph, is presented in Section~\ref{section:compress} as it may be of independent interest.

%In Section~\ref{section:experiments} we implement our algorithm and run experiments on synthetic datasets to demonstrate the practicality of our approach.

\begin{table}[ht]
\centering
{\tabulinesep=1.2mm
\begin{tabu}{|c||c|c||c|c|}
	\hline
	Regime & Upper Bound & Source & Lower Bound & Source \\ \hline
	$t=1$ & $O(1)$ & \cite{M01} & $\Omega(1)$ & Trivial \\ \hline
	$1 \le t \le n$ & $O\left(\frac{\log t}{\log \log t}\right)$ & $\star\star$ & $\Omega\left(\frac{\log t}{\log \log t }\right)$ & $\star\star$\\ \hline
	$t=n$ & $O\left(\frac{\log n}{\log \log n}\right)$ & \cite{F07} & $\Omega\left(\frac{\log n}{\log \log n}\right)$ & \cite{F07} \\
	\hline
\end{tabu}}
  \caption{Expected competitive ratio for online facility location. Upper bounds apply to the algorithm of~\cite{M01}. 
  Lower bounds apply to any randomized algorithm.}
  \label{table:ofl}
\end{table}

\begin{table}
\centering
{\tabulinesep=1.2mm
\begin{tabu}{|c||c|c|}
	\hline
	Regime & Space & Source \\ \hline
	$t=1$ & $O(k)$ & $\star\star$ \\ \hline
	$1 \le t \le n$ & $O(k \log t)$ & $\star\star$ \\ \hline
	$t = n$ & $O(k \log n)$ & \cite{F07} \\
	\hline
\end{tabu}}
  \caption{Space complexity (measured in weighted points) of algorithms for metric $k$-median and $k$-means in the streaming model.}
  \label{table:clustering}
\end{table}

\subsection{Prior Work}

%Due the large body of work, we restrict our review of prior work to random-order for streaming, to the online variant for facility location, and to metric lower-space streaming variant for $k$-median.

%\subsection{Random-Order Streams}

\paragraph{Random-Order Streams:}
There has been an increasing interest to design algorithms for data streams that arrive in random-order,
and in recent years the model has become quite popular.
Random-order streams have been considered for problems including rank selection~\cite{MP78, GM09, MV12}, 
frequency moments~\cite{AMO08}, 
entropy~\cite{GMV06}, 
submodular maximization~\cite{MZ15}, % MZ15 is coresets
%%set cover~\cite{AKL16}, % only the LB is for random-order
%%frequent items~\cite{DLM02, CP10}, %DLM02 is just MG alg, CP10 is similar items
%% distinct elements~\cite{HLM12}, % new technique, not new bound
and graph matching~\cite{KMM12, KKS14, EHLMO15}.
Lower bounds that hold even under the assumption of random-order have been developed using multi-party communication complexity~\cite{CKS03, CCM08, CJP08, GH09}.
%Many results have gone further by assuming that stream elements are drawn independently from an unknown distribution~\cite{GM07b, W08, W09, CMVW16}
Semirandom-order streams, in both the $t$-bounded and $\epsilon$-generated models, have been considered for rank selection~\cite{GM06, GM07a}.
The stochastic streaming model, which takes the random-order assumption a step further by assuming that stream elements are independent samples from an unknown distribution, has also attracted attention~\cite{GM07b, W09, CMVW16}.
The stochastic streaming model is strictly easier than the random-order model since any stochastic stream is automatically in random-order.

\paragraph{Online Facility Location:}
The study of online facility location was initiated by Meyerson~\cite{M01}.
He provided a simple randomized algorithm and proved that for random-order streams it is $O(1)$-competitive in expectation.
Later, Fotakis~\cite{F07} showed that for adversarial-order streams any randomized algorithm has an expected competitive ratio of $\Omega(\log n / \log \log n)$ and proved that Meyerson's randomized algorithm achieves this bound; he also presented a novel deterministic algorithm that achieves this bound.
For Euclidean space, a simple and practical deterministic algorithm was provided by~\cite{ABU04}.

\paragraph{Streaming Metric $k$-median and $k$-means Clustering:}
The streaming $k$-median and $k$-means problems have only been considered in the adversarial-order model.
These problems are well-studied; here we mention only the metric space results that achieved an improvement in the space bound over the previous state-of-the-art.
The first streaming solution computed a $2^{O(1/r)}$-approximation for any $r \in (0,1)$ and stored $O(n^r / r)$ points~\cite{Guha2003}.
Later, an algorithm storing only $O(k \log^2 n)$ points was provided~\cite{Charikar2003}.
%The large constant in the approximation factor was reduced by~\cite{Guha2009} to $34 + \epsilon$ while remaining the same space bound.
The current state-of-the-art $O(1)$-approximation stores $O(k \log n)$ points~\cite{BMOR11}.  
A variety of other results are known for Euclidean space.

\begin{table}[ht]
\centering
{\tabulinesep=1.2mm
\begin{tabu}{|c|c|}
    \hline
    Runtime & Source \\ \hline
    $O(n^2 \log n)$ & \cite{JainVaz2001} \\ \hline
     $O(nk \log k)$ & \cite{Indyk99} \\ \hline
     $O(n^2)$ & \cite{Mettu2003}  \\ \hline
     $O(nk + n\log n + k^2 \log^2 n)$ & \cite{Mettu2004} \\ \hline
     $O(nk + n^{1/2} k^{3/2} \log^2 n \log^{3/2} k)$ & \cite{Chen2009} \\ \hline
     $O(nk)$ & $\star\star$ \\ \hline %\hline
     $\Omega(nk)$ & \cite{Mettu2004} \\
    \hline
  \end{tabu}}
  \caption{Results for metric $k$-median in the RAM model}
  \label{table1}
\end{table}

\paragraph{RAM-Model Metric $k$-median and $k$-means Clustering:}
The history of fast $O(1)$-approximations\footnote{Most of the results shown in Table~\ref{table1} actually output $O(k)$ centers instead of exactly $k$.
However, we observe that the result of~\cite{Mettu2003} implies that any solution of $O(k)$ centers can be converted to a solution of exactly $k$ centers in $O(k^2)$ time.} in the RAM model is summarized in Table~\ref{table1}, omitting results that do not improve the runtime for any value of $k$.
These results for $k$-median generalize to $k$-means with a larger constant in the approximation ratio.
We conclude this line of research with an optimal $O(nk)$ time algorithm, matching the $\Omega(nk)$ time lower bound for any randomized algorithm~\cite{Mettu2004}.
We remark that there exists an $O(nk)$ time algorithm for $k$-means in Euclidean space~\cite{ADK2009}, but it relies on the principal axis theorem and therefore does not generalize to $k$-median or to other metric spaces.

\section{Preliminaries} \label{section:preliminaries}

Let $(\mathcal{X},d)$ be a metric space.
In the facility location problem with parameter $f > 0$ (called the facility cost), we are given a set\footnote{We use the word ``set'' to actually mean ``multiset''.  Multisets may contain multiple copies of the same element.} $A \subset \mathcal{X}$ called \textit{demands}.  The problem is to compute a set $B \subset \mathcal{X}$ called \textit{facilities} and to connect
each demand to a facility.
To connect demand $a$ to facility $b$, we incur cost $d(a,b)$.
We also incur cost $f$ for each facility opened.
The objective is to compute $B$ such that the total cost is minimized.
Defining $\cost(A,B) = \sum_{a \in A} \min_{b \in B} d(a,b)$, the total cost is $|B|f + \cost(A,B)$ by connecting each demand to the nearest facility.
%We pay cost $f$ to open a facility, and we pay cost $d(a,b)$ to connect point $a$ to facility $b$.

%The $k$-median problem is to determine $k$ points () such that the sum of the distances of each point to its nearest center is minimized.

In online facility location, we receive $A$ as a stream of points.
When point $p$ arrives, we may open a facility (incurring facility cost $f$) and then must connect $p$ to a facility (incurring connection cost).
Observe that if we open a facility at the location of $p$, there is no connection cost.
The problem is online because the decisions to open a facility and connect $p$ are irrevocable, meaning that a facility can never be closed and that $p$ cannot be reconnected if a closer facility opens later.

For the $k$-median problem, there is no facility cost but the number of facilities (here called \textit{centers}) is fixed at $k$.
The goal is to compute a set $B$ that minimizes the total cost $\cost(A,B)$.
When the input arrives as a stream, we seek to design algorithms that require a minimal amount of memory.

%When there is no subscript, we use $\opt$ to denote the optimal cost for the facility location problem.  Here we take the facility cost $f$ to be fixed.

%\begin{definition} \label{def:optOFL}
%Let $A$ be a set and let $f > 0$.
%The optimal cost for facility location, denoted $\opt(A)$, is the minimum of $kf + \opt_k(A)$ where $k$ ranges over all positive integers.
%\end{definition}

%\noindent

%the algorithm must make an irrevocable decision to either open a facility at that location (incurring cost $f$) or to connect this point to the nearest facility (incurring cost equal to the distance to the nearest facility).
%We call the algorithm $c$-competitive if the cost incurred on input $(A,f)$ is at most $c \opt(A,f)$.

%and compute $B$ of $k$ points.
%and is said to be $c$-competitive if $\cost(A,B) \le c \opt_k(A)$.

\begin{definition}[Optimal Cost] \label{def:optKmedian}
Let $A$ be a set and let $k \ge 1$.
$\opt_k(A)$ is defined as the minimum of $\cost(A,B)$ where $B \subset \mathcal{X}$ ranges over all sets of $k$ points.
\end{definition}

The optimal cost for $k$-median is $\opt_k(A)$.
For facility location, the optimal cost is the minimum $kf + \opt_k(A)$ where $k$ ranges over all positive integers.
An $\alpha$-approximation is a solution with cost at most $\alpha$ times the optimum.

A $t$-semirandom stream is the result of a random-order stream that has been intercepted by a $t$-bounded adversary (see Definition~\ref{def:semirandom}). 
Imagine that the stream of elements is a deck of cards,
initially shuffled into random order.
The adversary draws cards into his hand from the deck.
He may give any card from his hand to the algorithm.
The restriction is that he can have at most $t$ cards in his hand at any time.
This means that if he has a full hand of $t$ cards, he cannot draw a new card until giving one to the algorithm.
See Figure~\ref{fig:adversary} for an example.

\begin{SCfigure} \label{fig:adversary}
	\centering
	\includegraphics[scale=0.2]{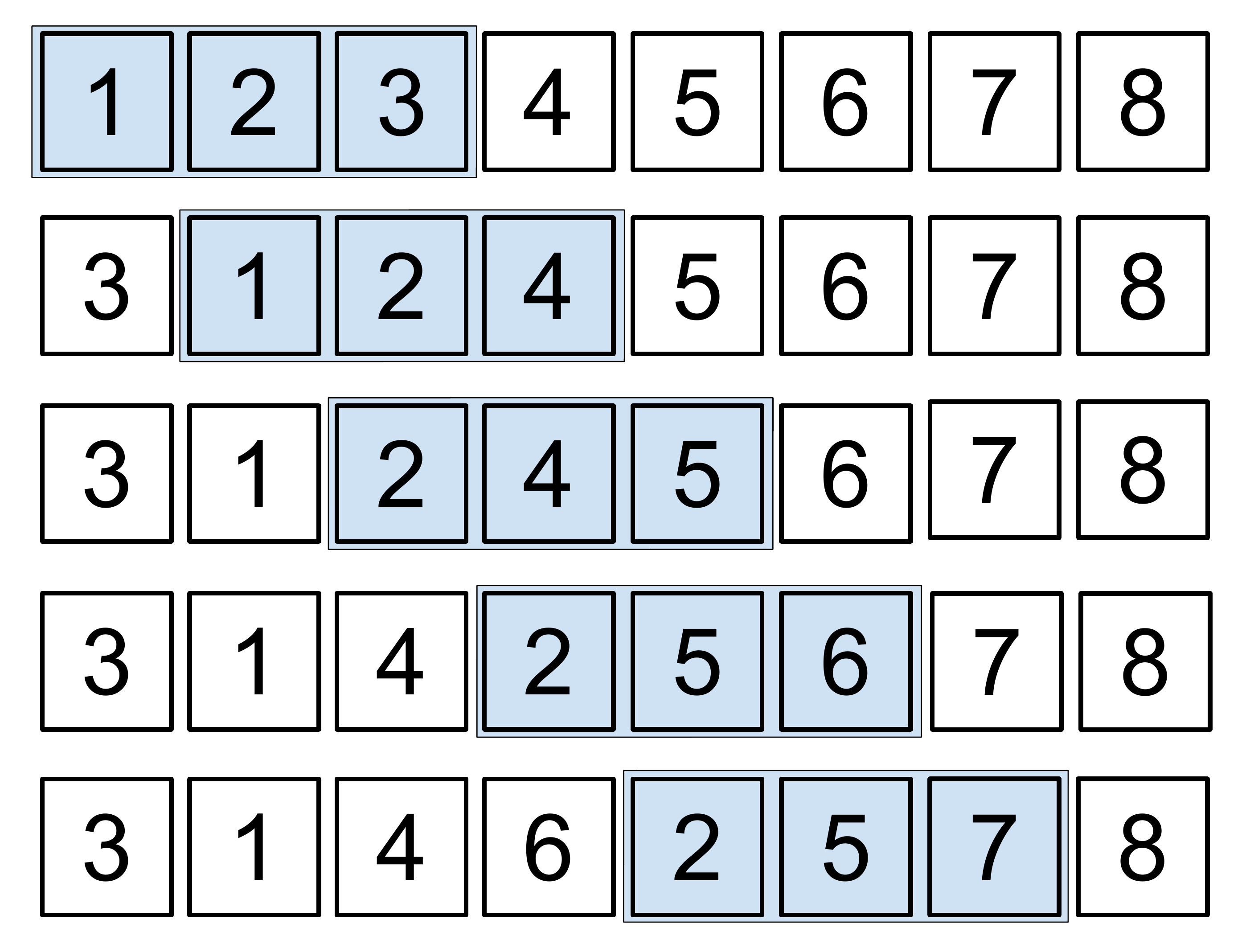}
	\caption{Example of a $3$-bounded adversary acting on a stream.
		Each step is shown on a different row.
		The shaded box represents the memory of the adversary.
		The adversary must send an element to the algorithm before receiving the next element.}
\end{SCfigure}

\section{Online Facility Location} \label{section:ofl}

The algorithm of Meyerson~\cite{M01} is simple and elegant.
Let $f > 0$ be the facility cost parameter.
When a point $p$ arrives, let $\delta(p)$ be the distance between $p$ and the nearest facility.
With probability $\min(1,\delta(p)/f)$, we open a facility at $p$ and pay facility cost $f$.
Otherwise, we connect $p$ to the nearest facility and pay connection cost $\delta(p)$.  We write \texttt{OFL} to refer to this algorithm.

\begin{theorem} \label{thm:general}
Let $r > 1$ and $h \in \mathbb{N}$ such that $r^h \ge 4t$, and let $k$ be a positive integer.
If $\texttt{OFL}(f)$ runs on a $t$-semirandom stream $S$,
the facility cost and connection cost incurred by $\texttt{OFL}(f)$ are each less than $(r+3)\opt_k(S) + (h+2)fk$ in expectation.
\end{theorem}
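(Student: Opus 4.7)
The plan is to fix an optimal $k$-median solution with centers $b_1^*, \ldots, b_k^*$ inducing clusters $C_1, \ldots, C_k$ of the stream $S$, and to bound separately the expected facility cost and the expected connection cost incurred by $\texttt{OFL}$ within each cluster. Since the $t$-bounded property of a permutation is preserved under restriction to any subsequence (inversions only decrease), and a uniformly random permutation restricted to a fixed subset is still uniform by exchangeability, the arrival order within each cluster $C_i$ is itself $t$-semirandom on $|C_i|$ elements. The proof therefore reduces to a per-cluster analysis followed by summation over $i$.

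Within a single cluster $C_i$ with optimal cost $c_i = \sum_{p \in C_i} d(p, b_i^*)$ and average radius $\rho_i = c_i / |C_i|$, I would introduce $h+1$ concentric rings $R_0, R_1, \ldots, R_h$ around $b_i^*$ at geometrically growing radii $\rho_i, r\rho_i, \ldots, r^h \rho_i$. A Markov-type count bounds $|R_j| \le |C_i| / r^j$. Upon the arrival of $p$, $\texttt{OFL}$ pays expected facility cost $f \cdot \min(1, \delta(p)/f)$ and expected connection cost $\delta(p)\bigl(1 - \min(1, \delta(p)/f)\bigr)$, where $\delta(p)$ is the distance to the nearest already-opened facility. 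I would bound both contributions ring by ring using the triangle inequality to relate $\delta(p)$ to the ring radius $r^j \rho_i$ whenever a facility has previously opened in an inner ring.

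The key probabilistic ingredient is the following: under pure random order, the probability that a point in ring $R_j$ arrives before \emph{any} interior point from the inner ring has opened a facility decays geometrically in the number of interior points already seen. A $t$-bounded adversary can perturb these probabilities only mildly, because it shifts the effective rank of any element by at most $t$. Concretely, the per-arrival cost contribution in ring $R_j$ is inflated by roughly a factor of $1 + t/r^j$ relative to the random-order case; the hypothesis $r^h \ge 4t$ ensures this inflation is $O(1)$ even at the outermost ring, so in expectation only a constant number of facilities are opened per ring. Summing over the $h+1$ rings and absorbing boundary contributions gives an $(h+2)f$ overhead in facility cost and an $O(c_i)$ overhead in connection cost per cluster.

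The main obstacle will be formalizing the ``$t$-bounded adversary mildly distorts random order'' lemma and combining the per-ring contributions into the clean linear bound $(r+3)c_i + (h+2)f$ per cluster. I expect the $(r+3)$ factor to arise from telescoping a geometric series of ratio $r$ coming from the ring radii and from reconciling the constants produced by the Markov-type ring bounds, while the additive $(h+2)f$ reflects the $O(1)$ facilities opened per ring across $h+1$ rings, with two extra terms absorbing the first opening and boundary effects. Summing the per-cluster bound over all $k$ clusters then yields $(r+3)\opt_k(S) + (h+2)fk$ on each of the facility and connection costs, as claimed.
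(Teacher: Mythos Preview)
Your outer shell---decompose into optimal clusters, analyze each cluster with concentric rings, sum---matches the paper. But the core of the argument, where the $t$-semirandom assumption enters and where $r^h \ge 4t$ is used, is not right as stated.

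First, the middle rings do \emph{not} require any randomness assumption at all. The paper partitions each cluster into an inner core $G$ (radius $\le 2a$), $h$ annuli $M_1,\ldots,M_h$ (radii between $2a$ and $2r^h a$), and a far set $B$ (radius $> 2r^h a$). For each $M_j$, in \emph{any} arrival order the expected connection cost before a facility opens in $M_j$ is $< f$ (since a point at distance $\delta$ either pays $\delta$ or opens with probability $\delta/f$), and once a facility is open in $M_j$ the triangle inequality gives $\delta(m)\le (r+1)d(m,c)$. This already yields the $(h+2)f$ overhead and the factor $r+1$ on the connection cost of $G\cup M$, with no appeal to semirandomness. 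Your attempt to justify ``$O(1)$ facilities per ring'' via a $1+t/r^j$ inflation factor is both unnecessary here and incorrect as written: for inner rings ($j$ small) that factor is $\Theta(t)$, not $O(1)$.

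Second, the role of $r^h\ge 4t$ is not to control inflation on the outermost ring; it is to make the far set $B$ small. By Markov, $|B| < n/(2r^h) \le n/(8t)$ while $|G|\ge n/2$, so $|B|/|G| < 1/(4t)$. The entire semirandom analysis is devoted to $B$: one must show that a point $b\in B$ is likely to arrive \emph{after} enough of $G$ has arrived that a facility sits near the center. The paper formalizes this by letting $\alpha_i$ bound the expected $\delta$ of the best of the first $i$ points of $G$ and $\beta_i$ count the $B$-points arriving between the $i$th and $(i{+}1)$st $G$-point, and then bounding $\sum_i \alpha_i\beta_i$. Since the adversary can hold at most $t$ points, the worst case satisfies $\beta_0+\cdots+\beta_i \le \beta'_0+\cdots+\beta'_{i+t-1}$ where $\beta'$ are the random-order counts; together with $|B|/|G|<1/(4t)$ this sum is $< f$. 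Your proposal has no analogue of this step, and without it the far points (which have unbounded distance to the center and hence cannot be controlled by any triangle-inequality-to-ring argument) are not accounted for.
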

\begin{proof}
Partition $S$ into $k$ optimal clusters $\{X_\ell\}_{\ell \in [k]}$ so that $\opt_k(S) = \sum_{\ell \in [k]} \opt_1(X_\ell)$.
Consider the centers $\{c_\ell\}_{\ell \in [k]}$ such that $\cost(X_\ell,c_\ell) = \opt_1(X_\ell)$.
Lemmas~\ref{lemma:boundG},~\ref{lemma:boundM}, and~\ref{lemma:boundB} bound the expected connection cost and expected facility cost on each $X_\ell$ by $(r+3)\opt_1(X_\ell) + (h+2)f$.
%Observe that $\delta(p)$, used to upper bound the expected connection cost, is also an upper bound on the expected facility cost.
%With Fact~\ref{fact:waitTime} now trivial for facility cost, an analogous argument yields an identical bound for the facility cost on $X_\ell$.
We obtain the result by summing over all $\ell \in [k]$.
\end{proof}

We now provide the results used in the proof of Theorem~\ref{thm:general}.
Let $X \subset S$ be a set of $n$ points.  Given a center point $c$, define $A = \cost(X,c)$ and $a = A/n$.
We partition $X$ into three pieces as follows:

$G = \{ x \in X : d(x,c) \le 2a \}$
	
$M = \{ x \in X : 2 a < d(x,c) \le 2 r^h a \}$
	
$B = \{ x \in X : d(x,c) > 2 r^h a \}$

\noindent
Defining $A_Z = \cost(Z,c)$ for any set $Z$, we decompose $A = A_G + A_M + A_B$.

\begin{fact} \label{fact:waitTime}
In any order, the expected connection cost of a set $Z$ incurred before a facility opens in $Z$ is less than $f$.
\end{fact}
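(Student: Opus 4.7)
The plan is to track, step by step, the decisions OFL makes at points of $Z$ and compare the expected connection cost paid at each step to the probability that a facility opens there. First I would enumerate the points of $Z$ in the order they arrive in the full stream as $p_1, p_2, \ldots$, let $D_i = \delta(p_i)$ denote the distance from $p_i$ to the nearest facility at the moment $p_i$ arrives (which may reflect facilities opened at points outside $Z$ that interleave), let $T$ be the first index at which OFL opens a facility at some $p_i \in Z$ (or $\infty$ if it never does), and let $E_i = \{T \geq i\}$. The quantity to bound is then $C = \sum_{i < T} D_i$, and I want to show $\mathbb{E}[C] < f$.

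Next I would condition on the filtration $\mathcal{F}_i$ consisting of all stream arrivals and all coin flips strictly before step $i$, which makes both $D_i$ and the indicator $\mathbf{1}[E_i]$ measurable. On the event $E_i$, the step-$i$ contribution to $C$ equals $D_i$ with probability $1 - \min(1, D_i/f)$ and equals $0$ otherwise, while the probability of opening at step $i$ equals $\min(1, D_i/f)$. The key per-step inequality is
\[
  D_i\bigl(1 - \min(1, D_i/f)\bigr) \;\leq\; f \cdot \min(1, D_i/f),
\]
which is immediate: if $D_i \leq f$ the left side equals $f(D_i/f)(1 - D_i/f) \leq f(D_i/f)$, while if $D_i > f$ the left side is $0$ and the right side is $f$.

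Plugging this into the tower property gives
\[
  \mathbb{E}\bigl[\mathbf{1}[E_i]\, D_i(1 - \min(1, D_i/f))\bigr] \;\leq\; f\,\mathbb{E}\bigl[\mathbf{1}[E_i]\, \min(1, D_i/f)\bigr],
\]
where the right-hand side is precisely $f \cdot \Pr[T = i]$. Summing over $i$ yields $\mathbb{E}[C] \leq f\cdot \Pr[T < \infty] \leq f$. For the strict inequality claimed, I would note two cases: if $D_i > 0$ holds with positive probability on some $E_i$, the per-step inequality above is strict on that event and the sum is strict; otherwise $D_i = 0$ almost surely whenever $E_i$ holds, so $C = 0 < f$ trivially. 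Either way $\mathbb{E}[C] < f$.

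The argument never uses any property of the order of $Z$ within the stream (other than that a point's facility-opening decision depends only on the prior history), which is exactly why the fact holds \emph{in any order}. I do not expect a real obstacle here; the only subtlety is making sure the conditioning is set up so that $D_i$ is measurable when we apply the per-step bound, which the filtration $\mathcal{F}_i$ handles cleanly.
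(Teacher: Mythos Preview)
Your argument is correct. The paper proves the same fact by a backward recursion: writing $x_j=\min(1,\delta(z_j))$ and $E_i$ for the expected pre-opening connection cost on the suffix $(z_i,\ldots,z_m)$, it uses $E_i=(1-x_i)(x_i+E_{i+1})$ and inducts from $E_m\le 1/4$ to $E_1<1$. Your route is a forward charging argument instead: the per-step inequality $D_i(1-\min(1,D_i/f))\le f\cdot\min(1,D_i/f)$ says that the expected connection cost at step $i$ (on $E_i$) is dominated by $f$ times the conditional opening probability, and summing telescopes to $f\cdot\Pr[T<\infty]\le f$. The two are closely related but genuinely different decompositions. Your version has the advantage of making the dependence of $D_i$ on prior coin flips explicit via the filtration $\mathcal{F}_i$, whereas the paper's recursion implicitly treats the $x_j$ as if fixed (effectively a worst-case-over-histories argument). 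The paper's induction, on the other hand, is slightly more self-contained and yields the strict inequality without a separate case analysis, since $(1-x_i)(x_i+E_{i+1})<1$ whenever $E_{i+1}<1$. Either proof is fine here.
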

\begin{proof}
	For convenience we normalize to $f = 1$.
	Let $(z_1, \ldots, z_m)$ be the order of $Z$.
	Define $x_j = \min(1, \delta(z_j))$.
	There is a probability of $x_j$ that point $z_j$ opens as a facility; otherwise $z_j$ incurs connection cost $x_j$.
	%Fixing any ordering of $A$, 
	%let $x_j$ be the distance of the $j^\text{th}$ point of $A$ to the nearest facility.
	
	Let $E_i$ denote the expected connection cost before a facility is opened when \texttt{OFL} is run on the suffix $(z_i, z_{i+1}, \ldots, z_m)$.
	For $1 \le i \le j \le m$ define $P_{i}^{j} = \prod_{\ell=i}^{j} (1 - x_\ell)$. 
	Observe that $E_i = \sum_{j=i}^{m} x_j P_{i}^{j}$.
	We seek to prove that $E_1 < 1$.
	
	We write the recursive formula $E_{i} = (1 - x_{i})(x_i + E_{i+1})$.
	Observe that $E_m \le 1/4$, with the maximum occurring when $x_m = 1/2$.
	Assuming inductively that $E_{i+1} < 1$, observe that $E_i = (1 - x_i)(x_i + E_{i+1}) < 1 - x_i^2 \le 1$.
	We conclude that $E_1 < 1$.
\end{proof}

Observe that $\delta(p)$ can be used to simultaneously bound both the expected connection cost and the expected facility cost.
We use this in Lemmas~\ref{lemma:boundG}-\ref{lemma:boundB} to bound both types of cost with the same argument.

\begin{lemma} \label{lemma:boundG}
In any order, the expected connection cost and expected facility cost of $G$ are each at most $f + A_G + 2a|G|$.
\end{lemma}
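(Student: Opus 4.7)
The plan is to bound the expected facility cost and the expected connection cost of $G$ through a single quantity. When a point $p$ arrives, \texttt{OFL} pays expected facility cost $\min(1,\delta(p)/f)\cdot f = \min(f,\delta(p))$ and expected connection cost $(1-\min(1,\delta(p)/f))\delta(p)$; both are at most $\delta(p)$. So if I can show that $\sum_{p \in G} \delta(p)$ is at most $A_G + 2a|G|$ in expectation, with an extra additive $f$ to spare, then both bounds in the lemma follow simultaneously.

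To obtain this, I would condition on the (random) first time $T$ at which a point of $G$ opens as a facility in \texttt{OFL}. Before time $T$, the $G$-points seen so far have paid only connection cost (possibly to facilities that were opened outside $G$), and Fact~\ref{fact:waitTime} applied to the subsequence of $G$-arrivals shows that this expected connection cost is less than $f$. At time $T$ exactly one facility opens in $G$, contributing $f$ to facility cost and $0$ to connection cost. Hence the ``up to and including $T$'' portion contributes at most $f$ to either of the two costs we are analyzing; the case in which no $G$-point ever opens as a facility is already handled by Fact~\ref{fact:waitTime} alone.

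For a point $p \in G$ arriving after time $T$, let $g \in G$ denote the facility opened at time $T$. The nearest facility to $p$ at its arrival is no farther than $g$, so by the triangle inequality
\[
\delta(p) \le d(p,g) \le d(p,c) + d(c,g) \le d(p,c) + 2a,
\]
using $p,g \in G$. Summing over all post-$T$ points of $G$ gives at most $A_G + 2a|G|$, so each of the expected facility cost and expected connection cost of $G$ is bounded by $f + A_G + 2a|G|$, as claimed.

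The main subtlety is invoking Fact~\ref{fact:waitTime} on the subsequence of $G$-arrivals while other facilities may open in between from points outside $G$: any such external facility can only decrease $\delta$ on subsequent $G$-arrivals, which only strengthens the recursion $E_i = (1-x_i)(x_i+E_{i+1})$ used to prove Fact~\ref{fact:waitTime}, so its bound still applies. The triangle-inequality split also has to be arranged so that the slack $d(c,g)\le 2a$ is absorbed into the $2a|G|$ term rather than doubling the $A_G$ term; once this is done the remaining bookkeeping is routine.
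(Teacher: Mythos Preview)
Your proposal is correct and follows essentially the same approach as the paper's proof: use Fact~\ref{fact:waitTime} to bound the cost before a facility opens in $G$ by $f$, then bound $\delta(p)\le d(p,c)+2a$ via the triangle inequality afterward and sum. Your write-up is in fact more careful than the paper's, since you explicitly address why Fact~\ref{fact:waitTime} still applies when the $G$-arrivals are interleaved with points outside $G$.
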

\begin{proof}
By Fact~\ref{fact:waitTime}, the expected connection cost of points in $G$ before a facility opens in $G$ is less than $f$.
The facility cost is exactly $f$ when the first facility opens in $G$.
After a facility has opened in $G$, we may bound $\delta(g) \le d(g,c) + 2a$ for any $g \in G$ by the triangle inequality.
The result follows by summing over all $g \in G$.
\end{proof}

The proof of the next lemma is similar to the previous.
\begin{lemma} \label{lemma:boundM}
In any order, the expected connection cost and expected facility cost of $M$ are each at most $hf + (r+1)A_M$.
\end{lemma}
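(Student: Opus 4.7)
The plan is to partition $M$ into $h$ geometric shells and apply the template of Lemma \ref{lemma:boundG} to each shell separately. Specifically, for $i \in \{1, \ldots, h\}$, define
\[
M_i = \{x \in X : 2r^{i-1} a < d(x,c) \le 2r^i a\},
\]
so that $M = \bigcup_{i=1}^h M_i$ (disjointly) and $A_M = \sum_{i=1}^h A_{M_i}$. The point of this partition is that inside a single shell, every element is within a constant factor of every other element in distance to $c$, so the same triangle-inequality argument used for $G$ goes through.

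For each shell $M_i$, I would mimic Lemma \ref{lemma:boundG} verbatim. First, Fact \ref{fact:waitTime} says that the expected connection cost incurred by points of $M_i$ before a facility opens in $M_i$ is less than $f$, and the facility cost of that first opening is exactly $f$. After some $c_i \in M_i$ has been opened as a facility, the triangle inequality gives
\[
\delta(m) \le d(m,c_i) \le d(m,c) + d(c,c_i) \le d(m,c) + 2r^i a \qquad \text{for every } m \in M_i,
\]
and since $\delta(m)$ simultaneously upper-bounds the expected connection cost and the expected facility cost contributed by $m$, summing over $m \in M_i$ gives a contribution of at most $A_{M_i} + 2 r^i a |M_i|$ to each of the two cost types.

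The key quantitative step, and the only place this differs from Lemma \ref{lemma:boundG}, is absorbing $2 r^i a |M_i|$ into $A_{M_i}$. Since each $m \in M_i$ satisfies $d(m,c) > 2 r^{i-1} a$, we have $2 r^{i-1} a \, |M_i| < A_{M_i}$, hence $2 r^i a \, |M_i| < r A_{M_i}$. Combining the pre- and post-opening contributions, each of the expected connection and facility cost on $M_i$ is bounded by $f + (r+1) A_{M_i}$. Summing over $i = 1, \ldots, h$ gives the stated bound $hf + (r+1) A_M$.

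I do not expect any serious obstacle: the main design choice is the ratio of the shells, which is forced to be $r$ so that the shells exactly tile the range $(2a, 2r^h a]$ defining $M$ and so that the factor $2r^i a / (2r^{i-1} a) = r$ appears in the absorption step. Everything else is a direct transcription of the argument for $G$, with the triangle-inequality slack $2a$ replaced by the shell-dependent slack $2 r^i a$.
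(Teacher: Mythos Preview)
Your proposal is correct and essentially identical to the paper's proof: both partition $M$ into the $h$ shells $M_i = \{x : 2r^{i-1}a < d(x,c) \le 2r^i a\}$, apply Fact~\ref{fact:waitTime} to each shell, and then use the triangle inequality together with the lower bound $d(m,c) > 2r^{i-1}a$ to get $\delta(m) \le d(m,c) + 2r^i a \le (r+1)d(m,c)$. The only cosmetic difference is that the paper absorbs the additive term pointwise (writing $\delta(m) \le (r+1)d(m,c)$ directly) while you do it after summing, but the arithmetic is the same.
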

\begin{proof}
We partition $M$ into $h$ parts, defining $M_j = \{ x \in X : 2 r^{j-1} a < d(x,c) \le 2 r^j a \}$ for $j \in \{1, \ldots, h\}$.
By Fact~\ref{fact:waitTime}, the expected connection cost of points in $M_j$ before a facility opens in $M_j$ is less than $f$.
The facility cost of $M_j$ is exactly $f$ when the first facility opens in $M_j$.
After a facility opens in $M_j$, we may bound $\delta(m) \le d(m,c) + 2 r^j a \le (r+1)d(m,c)$ for each $m \in M_j$ by the triangle inequality.
The result follows by summing over all $m \in M$.
\end{proof}

Fixing an order, let $\alpha'_i$ be the expected connection cost of the $i^\text{th}$ point of $G$.  For $i \in \{1, \ldots, |G|\}$, define $\alpha_i = \min_{1 \le j \le i} \alpha'_j$.
For $i \in \{0, \ldots, |G|\}$, let $\beta_i$ be the number of points in $B$ that arrive after exactly $i$ points of $G$ have arrived.

\begin{lemma} \label{lemma:prelimB}
In any order, the expected connection cost and expected facility cost of $B$ are each at most $A_B + 2a|B| + f \beta_0 + \sum_{i=1}^{|G|} \alpha_i \beta_i$.
\end{lemma}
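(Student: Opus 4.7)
The plan is to fix an arbitrary order and bound $E[\text{cost}(b)]$ individually for each $b \in B$, then sum. The key starting observation, which handles both cost types with one argument, is that the conditional expectations (given the history up to $b$'s arrival) of the facility cost and of the connection cost incurred at $b$ are each at most $\delta(b)$: these are $f\cdot\min(1,\delta(b)/f)$ and $\delta(b)(1-\min(1,\delta(b)/f))$, both bounded by $\delta(b)$. Taking outer expectation, it suffices to bound $E[\delta(b)]$ when $i \ge 1$, and to handle the case $i=0$ (where $\delta(b)$ may be infinite) directly.

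Fix $b \in B$ arriving after exactly $i\ge 1$ points of $G$, and let $T$ be the event that at least one of $g_1,\ldots,g_i$ opened a facility before $b$ arrived. On $T$, choosing any such $g_j$ and using the triangle inequality gives $\delta(b)\le d(b,g_j)\le d(b,c)+2a$. On $\neg T$, no $g_j$ opened, but for any $j\le i$ the facility set $F_j$ present when $g_j$ arrived is nonempty (otherwise $g_j$ would have opened deterministically) and survives until $b$ arrives, so $\delta(b)\le d(b,g_j)+\delta(g_j)\le d(b,c)+2a+\delta(g_j)$. Combining the two cases gives the pointwise bound
\[
\delta(b) \;\le\; d(b,c)+2a + \mathbf{1}_{\neg T}\cdot\min_{1\le j\le i}\delta(g_j).
\]

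Next I would take expectations. Observing that $\mathbf{1}_{\neg T}\le\mathbf{1}[g_j\text{ does not open}]$ for every individual $j\le i$,
\[
E[\mathbf{1}_{\neg T}\,\delta(g_j)] \;\le\; E[\mathbf{1}[g_j\text{ does not open}]\,\delta(g_j)] \;=\; \alpha'_j,
\]
and minimizing over $j$ yields $E[\mathbf{1}_{\neg T}\min_j\delta(g_j)]\le\alpha_i$. Hence $E[\delta(b)]\le d(b,c)+2a+\alpha_i$ for $i\ge 1$. The case $i=0$ is handled directly: the facility cost is trivially at most $f$, and the connection cost is also at most $f$ (either $\delta(b)>f$ and $b$ opens at zero connection cost, or $\delta(b)\le f$). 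Summing $d(b,c)+2a$ over $B$ yields $A_B+2a|B|$ and summing the per-point overhead yields $f\beta_0+\sum_{i=1}^{|G|}\alpha_i\beta_i$, which is the claim.

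The main subtlety is the handling of $\neg T$: once one sees that $\mathbf{1}_{\neg T}$ is pointwise dominated by the non-opening indicator of any single $g_j$, the product $\mathbf{1}_{\neg T}\delta(g_j)$ is dominated by exactly the connection-cost random variable of $g_j$, whose expectation is $\alpha'_j$. This is what converts an otherwise awkward expectation over a cascading product of correlated non-opening events into the minimum $\alpha_i$ of expected $G$-connection costs that appears in the bound.
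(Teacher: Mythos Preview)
Your proof is correct and follows essentially the same route as the paper: bound each cost at $b$ by $\delta(b)$, then bound $\delta(b)$ via the triangle inequality through an earlier point $g\in G$, and sum. The paper's version is terser---it simply writes $\delta(b)\le d(b,c)+2a+\delta(g)$, takes expectations, and minimizes over $g$. Your introduction of the event $T$ and the pointwise domination $\mathbf 1_{\neg T}\le\mathbf 1[g_j\text{ does not open}]$ is a genuine refinement: it explains precisely why the bound lands on the expected \emph{connection} cost $\alpha'_j=E[\delta(g_j)\mathbf 1[g_j\text{ does not open}]]$ rather than the larger $E[\delta(g_j)]$, a step the paper's one-line argument glosses over. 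Your separate treatment of the $i=0$ case (bounding each cost by $f$ directly, since $\delta(b)$ may be infinite) is likewise more careful than the paper's ``trivially''.
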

\begin{proof}
The first $\beta_0$ points of $B$ incur cost at most $f \beta_0$ trivially.
If a point $b$ arrives after a point $g$, then $\delta(b) \le d(b,c) + 2a + \delta(g)$ by the triangle inequality.
Hence for a point $b$ that arrives after $i$ points of $G$,
taking expectations and then the minimum over the $i$ preceding points of $G$ shows that $E[\delta(b)] \le d(b,c) + 2a + \alpha_i$.
The result follows by summing over all $b \in B$.
\end{proof}

The remainder of the proof will depend on the assumption of $t$-semirandom order.
For precision, we continue to use $E[\cdot]$ for expectation over the randomness used by $\texttt{OFL}$ and introduce $\E[\cdot]$ for expectation over the randomness of the stream order.
Let $\beta'_i$ be the number of points in $B$ that occur after exactly $i$ points of $G$ in the initial random-order stream before being intercepted by the adversary.
Since the adversary may hold at most $t$ points, if the adversary has received $t + i$ points of $G$ then the algorithm has received at least $i$ points of $G$.
This provides the relation $\beta_0 + \ldots + \beta_i \le \beta'_0 + \ldots + \beta'_{i + t - 1}$ for every $i \ge 0$.
We can view $\beta'_i$ as the number of balls in the $i^\text{th}$ bin when we randomly drop $|B|$ balls into $|G|+1$ bins.

Observe that $f > \alpha_1 \ge \alpha_2 \ge \ldots \ge \alpha_{|G|}$.
The adversary's optimal strategy against our bound in Lemma~\ref{lemma:prelimB} is to delay points in $G$ as long as possible.
We therefore identify the worst-case bound $\beta_0 = \beta'_0 + \ldots + \beta'_{t - 1}$ and $\beta_i = \beta'_{i + t - 1}$ for every $i \ge 1$.
We rewrite $f \beta_0 + \sum_{i=1}^{|G|} \alpha_i \beta_i = f \sum_{j = 0}^{t-1} \beta'_j + \sum_{i=1}^{|G|} \alpha_i \beta'_{i+t-1}$.

The difficulty is that $\alpha_i$ and $\beta_i$ are dependent random variables.
Although they cannot affect each other directly, both $\alpha_i$ and $\beta_i$ depend on the prefix of the stream ending on the $i^\text{th}$ point of $G$.
%Otherwise, we could simply take the product of their expectations.
To overcome this, we split the sum into two pieces and for each piece we find an upper bound on either $\alpha_i$ or $\beta_i$ that holds independently of the prefix.

The next lemma quantifies the intuition that if many points of $G$ have arrived then there must be a facility very close to $G$.
This bound suffices after a constant fraction of $G$ has arrived.

\begin{lemma} \label{lemma:highsum}
In $t$-semirandom order, $\sum_{i = \lceil |G|/2 \rceil + 1}^{|G|} \E[\alpha_i \beta_i] < f \ln(2) / 4t$.
\end{lemma}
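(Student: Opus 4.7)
My strategy is to bound $\E[\alpha_i]$ and $\E[\beta_i]$ separately and exploit (near-)independence to multiply them, producing a per-term bound of order $f/(ti)$ whose harmonic tail sums to $f \ln(2)/(4t)$.

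I first bound $\E[\beta_i]$. Recall $\beta_i = \beta'_{i+t-1}$, and in the pre-adversary random-order stream the $|B|$ points of $B$ are distributed uniformly among the $|G|+1$ gaps between $G$-points, so $\E[\beta'_j] = |B|/(|G|+1)$. By the definition of $B$, each $x \in B$ satisfies $d(x,c) > 2r^h a$, hence $A_B > 2r^h a |B|$ and $|B| < A/(2r^h a) = |X|/(2r^h)$. A parallel argument applied to $M \cup B = \{x : d(x,c) > 2a\}$ gives $|G| > |X|/2$. Combining, $\E[\beta_i] < 1/r^h \leq 1/(4t)$ by the hypothesis $r^h \geq 4t$.

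Next I argue $\E[\alpha_i] = O(f/i)$ for $i > |G|/2$. Because $\alpha_i \leq (1/i) \sum_{j \leq i} \alpha'_j$ (min is at most average), combined with Fact~\ref{fact:waitTime} capping the pre-facility connection cost of $G$ by $f$, and the triangle-inequality observation that once a facility opens in $G$ each subsequent $\alpha'_j$ is bounded by $4a(1 - 4a/f)$ and shrinks further as more facilities open, the min-over-$i$ argument yields the desired bound under random order.

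Combining: by conditioning on the positions of the $G$-points in the pre-adversary stream, $\beta'_{i+t-1}$ (a single-bin count in a balls-in-bins model, independent of \texttt{OFL}'s coins) decouples from $\alpha_i$ (determined by the internal arrangement within the prefix). Hence $\E[\alpha_i \beta_i] \leq \E[\alpha_i]\,\E[\beta_i] \leq f/(4ti)$, and summing over $i \in (\lceil |G|/2 \rceil, |G|]$ with $\sum 1/i < \ln 2$ produces the claim.

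The main obstacle is obtaining the clean $O(f/i)$ bound on $\E[\alpha_i]$ without accruing an additional $a|G|/i$ contribution from post-facility terms; I expect to handle this by exploiting that once a facility opens in $G$ the minimum $\alpha'_j$ decays faster than the running average, combined with the exponentially small probability (for $i > |G|/2$) that no facility has yet opened near $c$, so that the minimum is effectively set by the Fact~\ref{fact:waitTime} budget of $f$ rather than by the subsequent bulk contributions.
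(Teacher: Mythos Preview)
Your proposal has two genuine gaps, and both are resolved by a single observation you are missing: the bound $\alpha_i < f/i$ holds \emph{pointwise}, for every fixed stream order, not merely in expectation over the order.

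\textbf{Gap 1: the decoupling is not justified.} The paper explicitly flags that $\alpha_i$ and $\beta_i$ are dependent: both depend on the prefix of the stream ending at the $i$-th point of $G$. Your conditioning argument does not repair this. Even after fixing the positions of the $G$-points in the pre-adversary stream, $\alpha_i$ still depends on \emph{which} non-$G$ points (from $B$, $M$, and other clusters) land in the prefix, and that is correlated with how many $B$-points land in the $(i{+}t{-}1)$-th gap. So you cannot pass from $\E[\alpha_i\beta_i]$ to $\E[\alpha_i]\,\E[\beta_i]$.

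\textbf{Gap 2: the bound on $\alpha_i$.} Your ``min $\le$ average'' step gives $\alpha_i \le \tfrac{1}{i}\sum_{j\le i}\alpha'_j$, but $\sum_j \alpha'_j$ is \emph{not} bounded by $f$; Lemma~\ref{lemma:boundG} only gives $f + A_G + 2a|G|$. This is exactly the extra $O(a)$ term you flag as an obstacle, and your proposed fix (``the minimum decays faster than the average'') is not an argument.

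\textbf{What the paper does.} Let $\mu(g)$ be the indicator that no facility is open in $G$ after $g$ is processed. One checks (strengthening the triangle-inequality step of Lemma~\ref{lemma:prelimB}) that $\delta(b)\le d(b,c)+2a+\delta(g_j)\mu(g_j)$ for every $j\le i$, so in Lemma~\ref{lemma:prelimB} one may take $\alpha_i=\min_{j\le i}E[\delta(g_j)\mu(g_j)]$. But $\sum_{g\in G}E[\delta(g)\mu(g)]$ is precisely the expected connection cost of $G$ before a facility opens in $G$, which is $<f$ by Fact~\ref{fact:waitTime}. Pigeonhole over the $i$ points of $G'$ then gives $\alpha_i<f/i$ for \emph{every} order. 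With this deterministic bound, $\E[\alpha_i\beta_i]<(f/i)\,\E[\beta_i]<f/(4ti)$ with no independence needed, and the harmonic sum finishes as you wrote. Your $\E[\beta_i]<1/(4t)$ computation is correct and matches the paper.
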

\begin{proof}
Let $\mu(g)$ be the indicator random variable that no facility is open in $G$ after $g$ is processed.
Let $G'$ be the set of $i$ points of $G$ that have arrived.
Fact~\ref{fact:waitTime} implies that $\sum_{g \in G} E[\delta(g) \mu(g)] < f$.
Thus there must be some $g \in G'$ such that $E[\delta(g) \mu(g)] < f/i$.
Since $\alpha_i = \min_{g \in G'} E[\delta(g) \mu(g)]$, we bound $\alpha_i < f/i$.

By Markov's inequality $|B| < \frac{n}{2r^h} \le \frac{n}{8t}$ and $|G| \ge \lceil \frac{n}{2} \rceil$.
Then $\E[\beta_i] = \E[\beta'_{i+t-1}] = \frac{|B|}{|G| + 1} < \frac{1}{4t}$ which implies $\E[\alpha_i \beta_i] < \E[(f/i) \beta_i] < \frac{f}{4ti}$.
The result follows since $\sum_{i = \lceil |G|/2 \rceil + 1}^{|G|} \frac{1}{i} < \ln(2)$.
\end{proof}

If we drop $|B|$ balls into $|G| + 1$ bins and condition on the number of balls in $i$ of the bins, the expected number of balls in any other bin is at most that of dropping $|B|$ balls into $|G| - i + 1$ bins.
This bound blows up towards the end of the stream but suffices for the first half.

\begin{lemma} \label{lemma:lowsum}
In $t$-semirandom order, $\sum_{i=1}^{\lceil |G|/2 \rceil} \E[\alpha_i \beta_i] < f \ln(3) / 2t$
\end{lemma}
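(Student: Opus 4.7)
The plan is to mirror the structure of Lemma~\ref{lemma:highsum}: conditioning on the prefix to decouple $\alpha_i$ and $\beta_i$, but replacing the pointwise bound $\alpha_i < f/i$ (which would leave a spurious $\log|G|$ factor when summed from $1$ to $\lceil|G|/2\rceil$) by a global bound on $\sum_i \alpha_i$ coming from Fact~\ref{fact:waitTime}.

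First, I would apply the conditional bound on $\beta_i$ suggested by the hint. Under the worst-case delay strategy, when the algorithm processes its $i$th point of $G$, the adversary has read the original stream through its $(i{+}t{-}1)$th point of $G$. Thus the prefix reveals the bin counts $\beta'_0,\ldots,\beta'_{i+t-2}$, and $\alpha_i$ is measurable with respect to it. By the symmetry of random permutations, the remaining $|B|-\sum_{j\le i+t-2}\beta'_j$ points of $B$ are distributed uniformly over the $|G|-i-t+2$ unrevealed bins, so $\E[\beta_i\mid\text{prefix}]\le |B|/(|G|-i-t+2)$. Since this bound is deterministic in $i$, the tower property gives $\E[\alpha_i\beta_i]\le \E[\alpha_i]\cdot|B|/(|G|-i-t+2)$.

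Second, I would apply Fact~\ref{fact:waitTime} to all of $G$ to obtain $\sum_{g\in G} E[\delta(g)\mu(g)] < f$ for every fixed stream order. Using the refined triangle inequality $\delta(b)\le d(b,c)+2a+\delta(g)\mu(g)$ that the proof of Lemma~\ref{lemma:prelimB} implicitly uses, $\alpha_i$ is identified with $\min_{j\le i} E[\delta(g_j)\mu(g_j)]$; in particular $\alpha_i\le E[\delta(g_i)\mu(g_i)]$ for each $i$, whence $\sum_{i=1}^{|G|}\alpha_i < f$ pointwise and therefore $\sum_i \E[\alpha_i] < f$. Third, since $|B|/(|G|-i-t+2)$ is increasing in $i$, for every $i\le\lceil|G|/2\rceil$ it is at most $|B|/(|G|/2-t+2)$. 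Factoring this worst case out of the sum yields
\[
\sum_{i=1}^{\lceil|G|/2\rceil}\E[\alpha_i\beta_i]\;\le\;\frac{|B|}{|G|/2-t+2}\sum_{i=1}^{\lceil|G|/2\rceil}\E[\alpha_i]\;<\;\frac{f|B|}{|G|/2-t+2}.
\]
Combining with the Markov bounds $|B|<n/(2r^h)\le n/(8t)$ and $|G|\ge\lceil n/2\rceil$ (from the hypothesis $r^h\ge 4t$) gives $|B|/(|G|/2-t+2)<\ln(3)/(2t)$, completing the proof.

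The main obstacle is the final numerical bookkeeping: squeezing out the specific constant $\ln(3)$ rather than a nearby value requires careful handling of the additive shift by $t$ in the denominator, which is only negligible once $|G|\gg t$, so a small case analysis near this boundary may be required. A secondary subtlety is that $\alpha_i$ must be interpreted via the $\mu$-weighted quantity $E[\delta(g)\mu(g)]$, for which Fact~\ref{fact:waitTime} directly yields summability to $f$, rather than via the literal expected connection cost $E[\text{conn}(g)]$, whose sum is not controlled by $f$ alone.
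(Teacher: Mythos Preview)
Your overall architecture matches the paper's: condition on the prefix $P(i)$ so that $\alpha_i$ becomes deterministic, bound $\E[\beta_i\mid P(i)]\le h(i):=\frac{|B|}{|G|-t-i+2}$, and then use $\sum_i \E[\alpha_i]<f$ from Fact~\ref{fact:waitTime}. The divergence is in how you combine these ingredients.

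You bound $h(i)\le h(\lceil|G|/2\rceil)$ for all $i\le\lceil|G|/2\rceil$ and pull this maximum out of the sum, obtaining $\frac{f|B|}{\lfloor|G|/2\rfloor-t+2}$. Plugging in $|B|<n/(8t)$, $|G|\ge n/2$, and the non-vacuous case $t<n/8$, this quantity is only bounded by $f/t$, not by $f\ln(3)/(2t)$. The shortfall is not a boundary artifact to be patched by case analysis; it is a systematic factor of roughly $2/\ln(3)$ lost by replacing an average with a maximum, and it is enough to break the downstream Lemma~\ref{lemma:boundB} (where one needs $\frac14+\frac{\ln 2}{4t}+\frac{\ln 3}{2t}<1$).

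The paper's fix is to exploit one more piece of structure you have available but did not use: $\alpha_i$ is nonincreasing in $i$ (by its definition as a running minimum), hence so is $\E[\alpha_i]$. With $h(i)$ increasing and $\E[\alpha_i]$ nonincreasing summing to at most $f$, a Chebyshev-sum (rearrangement) argument shows that $\sum_{i=1}^{m} h(i)\,\E[\alpha_i]$ is maximized at the uniform choice $\E[\alpha_i]=f/m$ for $m=\lceil|G|/2\rceil$. This replaces your factor $h(m)$ by the average $\frac{1}{m}\sum_{i=1}^m h(i)$, yielding
\[
\sum_{i=1}^{m}\E[\alpha_i\beta_i]\;\le\;\frac{f}{m}\sum_{i=1}^{m}\frac{|B|}{|G|-t-i+2}
\;=\;\frac{f|B|}{m}\sum_{y=\lfloor|G|/2\rfloor-t+2}^{|G|-t+1}\frac{1}{y}.
\]
Now $|B|/m<1/(2t)$ by Markov, and the harmonic sum is $<\ln\!\bigl(\frac{|G|-t+1}{\lfloor|G|/2\rfloor-t+1}\bigr)<\ln 3$ once $t<n/8$ (otherwise $B=\varnothing$). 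This gives exactly $f\ln(3)/(2t)$.
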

\begin{proof}
Let $P(i)$ denote the order of the prefix of the stream ending on the $i^\text{th}$ point of $G$.
Observe that $\E[\beta'_i | P(i)] \le \frac{|B|}{|G|-i+1}$ with the maximum occurring when $\beta'_j = 0$ for all $j \in \{0,\ldots,i-1\}$.
This implies $h(i) := \max_{P(i)} \E[\beta_i | P(i)] \le \frac{|B|}{|G|-t-i+2}$.

By causality, points that arrive after the $i^\text{th}$ point of $G$ do not affect $\alpha_i$.
Therefore $\alpha_i$ is a constant when $P(i)$ is fixed.
We can now separate 
\begin{align*}
	\E[\alpha_i \beta_i] & = \E_{P(i)}[\E[ \alpha_i \beta_i | P(i)]] \\
	& = \E_{P(i)}[\E[\alpha_i | P(i)] \cdot \E[\beta_i | P(i)] \\
	& \le \E_{P(i)}[\E[\alpha_i | P(i)] \cdot h(i)] \\
	& = h(i) \E_{P(i)}[\E[\alpha_i | P(i)]] \\
	& = h(i) \E[\alpha_i]
\end{align*}

We have the restraint $\sum_{i=1}^{|G|} \E[\alpha_i] < f$ by Fact~\ref{fact:waitTime}.
Since $h(i)$ is increasing, the sum $\sum_{i=1}^{\lceil |G|/2 \rceil} h(i) \E[\alpha_i]$ is maximized when $\alpha_i = \frac{f}{\lceil |G|/2 \rceil}$ for $i \le \lceil |G|/2 \rceil$.
We bound $\sum_{i=1}^{\lceil |G|/2 \rceil} \E[\alpha_i \beta_i] \le \sum_{i=1}^{\lceil |G|/2 \rceil} \frac{f}{\lceil |G|/2 \rceil}\frac{|B|}{|G|-t-i+2}$.
By Markov's inequality $\frac{|B|}{\lceil |G|/2 \rceil} < \frac{1}{2t}$.
We may assume that $t < \frac{n}{8}$ since otherwise $B = \varnothing$ and there is nothing to show.
We conclude by $\sum_{i=1}^{\lceil |G|/2 \rceil} \frac{1}{|G|-t-i+2} 
= \sum_{y = \lfloor |G|/2 \rfloor - t + 2}^{|G|-t+1} \frac{1}{y} 
< \ln \left( \frac{|G|-t+1}{\lfloor |G|/2 \rfloor - t + 1} \right)
< \ln(3)$.
\end{proof}

We can now provide a bound for $B$ that holds for $t$-semirandom order streams.

\begin{lemma} \label{lemma:boundB}
In $t$-semirandom order, the expected connection cost and expected facility cost of $B$ are each less than $A_B + 2a|B| + f$.
\end{lemma}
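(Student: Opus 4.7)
The plan is to combine Lemma \ref{lemma:prelimB} with Lemmas \ref{lemma:highsum} and \ref{lemma:lowsum}, plus a direct calculation for the leading $f \beta_0$ term, to see that in expectation
\[
\E\!\left[f \beta_0 + \sum_{i=1}^{|G|} \alpha_i \beta_i\right] < f.
\]
Given this, Lemma \ref{lemma:prelimB} immediately yields the stated bound $A_B + 2a|B| + f$ for both the expected connection cost and the expected facility cost of $B$.

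First I would substitute in the worst-case identification already derived just above the lemma: $\beta_0 = \beta'_0 + \ldots + \beta'_{t-1}$ and $\beta_i = \beta'_{i+t-1}$ for $i \ge 1$. For the $f\beta_0$ term, I would use that since the pre-adversarial stream is in uniformly random order, dropping the $|B|$ points of $B$ into the $|G|+1$ gaps between consecutive points of $G$ yields $\E[\beta'_j] = |B|/(|G|+1)$, hence $\E[\beta_0] = t|B|/(|G|+1)$. The Markov-style bounds used inside Lemma \ref{lemma:highsum} give $|B| < n/(2r^h) \le n/(8t)$ (using the hypothesis $r^h \ge 4t$) and $|G| \ge n/2$, so $|B|/(|G|+1) < 1/(4t)$ and therefore $f \E[\beta_0] < f/4$.

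For the remaining sum, I would split it at $\lceil |G|/2 \rceil$ and apply the two bounds already proved:
\[
\sum_{i=1}^{\lceil |G|/2 \rceil} \E[\alpha_i \beta_i] < \frac{f \ln 3}{2t}, \qquad \sum_{i=\lceil |G|/2 \rceil+1}^{|G|} \E[\alpha_i \beta_i] < \frac{f \ln 2}{4t}.
\]
Adding the three contributions and using $t \ge 1$ gives
\[
\E\!\left[f \beta_0 + \sum_{i=1}^{|G|} \alpha_i \beta_i\right] < \frac{f}{4} + \frac{f \ln 3}{2} + \frac{f \ln 2}{4} < f,
\]
since $\tfrac14 + \tfrac{\ln 3}{2} + \tfrac{\ln 2}{4} < 1$. (For the edge case $t \ge n/8$ where $B = \varnothing$, the inequality is trivial; this is the same assumption already invoked in Lemma \ref{lemma:lowsum}.)

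I do not anticipate a real obstacle here: Lemmas \ref{lemma:highsum} and \ref{lemma:lowsum} already did the delicate work of decoupling $\alpha_i$ from $\beta_i$ under $t$-semirandom order, and the only remaining piece is bounding the $f\beta_0$ block, which is a direct random-order computation plus Markov. The one thing to double-check is that the constant works out strictly below $1$; this is why the hypothesis $r^h \ge 4t$ is tight enough to push $|B|/(|G|+1)$ below $1/(4t)$ and make $f \E[\beta_0] < f/4$, leaving just enough slack against $\tfrac{\ln 3}{2} + \tfrac{\ln 2}{4}$.
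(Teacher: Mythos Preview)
Your proposal is correct and matches the paper's proof essentially line for line: substitute Lemmas~\ref{lemma:highsum} and~\ref{lemma:lowsum} into Lemma~\ref{lemma:prelimB}, bound $\E[\beta_0] \le t \cdot \frac{|B|}{|G|+1} < \tfrac14$ via the worst-case identification $\beta_0 = \sum_{j=0}^{t-1}\beta'_j$, and check the numeric constant. Your inequality $\tfrac14 + \tfrac{\ln 3}{2} + \tfrac{\ln 2}{4} < 1$ is exactly the paper's observation that $\ln 18 < 3$, since $\tfrac{\ln 3}{2} + \tfrac{\ln 2}{4} = \tfrac{\ln 18}{4}$.
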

\begin{proof}
We substitute the bounds of Lemmas~\ref{lemma:highsum} and~\ref{lemma:lowsum} into Lemma~\ref{lemma:prelimB}.
The last part is to bound for the worst-case adversary $\E[\beta_0] = \sum_{k = 0}^{t-1} \E[\beta'_j] \le \frac{1}{4}$ and observe that $\ln(18) < 3$.
%Since $t \ge 1$ we have $\frac{\ln(18)}{4t} + \frac{1}{4} < 1$.
\end{proof}

\subsection{Application to Online Facility Location}

Our main result for online facility location is now a simple corollary of Theorem~\ref{thm:general}.
As shown in the Section~\ref{section:clustering}, Theorem~\ref{thm:general} yields results for both online facility location and $k$-median clustering by applying the theorem with different choices of $r$ and $h$.

\begin{corollary}
	On a $t$-semirandom order stream, \texttt{OFL} is $(2+o(1))\left( \frac{\log_2 t}{\log_2 \log_2 t} \right)$-competitive in expectation.
\end{corollary}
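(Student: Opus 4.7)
The plan is to derive the corollary directly from Theorem~\ref{thm:general} by balancing its two free parameters $r$ and $h$ against the constraint $r^h \ge 4t$. Let $k^*$ be the number of facilities opened by an optimal offline solution, so that $\opt = k^* f + \opt_{k^*}(S)$. Applying Theorem~\ref{thm:general} with $k = k^*$ bounds both the expected facility cost and the expected connection cost of \texttt{OFL} by $(r+3)\opt_{k^*}(S) + (h+2)fk^*$. Summing the two contributions yields a total expected cost of at most $2(r+3)\opt_{k^*}(S) + 2(h+2)fk^* \le 2\max(r+3,\, h+2)\cdot \opt$, so it suffices to minimize $\max(r+3,\,h+2)$ subject to $r^h \ge 4t$ with $r > 1$ and $h \in \mathbb{N}$.

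The next step is to balance $r$ against $h$. Setting $h \approx r$ collapses the constraint to $r^r \ge 4t$, whose asymptotic solution is $r = (1+o(1))\frac{\log_2 t}{\log_2 \log_2 t}$: for this $r$, one has $\log_2 r = (1-o(1))\log_2 \log_2 t$, and therefore $r\log_2 r = (1+o(1))\log_2 t$, which exceeds $\log_2(4t)$ for $t$ large enough. I would then take $h = \lceil \log_2(4t)/\log_2 r \rceil$, which is itself $(1+o(1))\frac{\log_2 t}{\log_2 \log_2 t}$, so that $\max(r+3,\,h+2) = (1+o(1))\frac{\log_2 t}{\log_2 \log_2 t}$ and the competitive ratio is $(2+o(1))\frac{\log_2 t}{\log_2 \log_2 t}$ as claimed.

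The main obstacle is tracking the $o(1)$ factor cleanly. Because $h$ must be a positive integer, $r$ must be chosen slightly larger than the exact real solution of $r^r = 4t$ in order to absorb the ceiling in $h$ without inflating $\max(r+3,\,h+2)$ beyond its leading order; the lower-order perturbations coming from the iterated logarithm $\log_2\log_2\log_2 t$ and from the additive constants $+3$ and $+2$ are also absorbed into the $o(1)$. For small $t$ (say $t = O(1)$), nothing needs to be shown, since \texttt{OFL} is already $O(1)$-competitive on random-order streams by Meyerson's original analysis, and the stated bound is trivial.
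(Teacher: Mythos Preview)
Your proposal is correct and follows essentially the same approach as the paper: apply Theorem~\ref{thm:general} with $k$ equal to the number of facilities in an optimal solution, then choose $r$ and $h$ both of order $\frac{\log_2 t}{\log_2\log_2 t}$ so that $r^h \ge 4t$ while $\max(r+3,h+2)$ stays at $(1+o(1))\frac{\log_2 t}{\log_2\log_2 t}$. The only cosmetic difference is that the paper takes $r = (1+\epsilon)\frac{\log_2 t}{\log_2\log_2 t}$ and simply sets $h = \lceil r \rceil$ (so that $h \le r+1$ and one can factor out $2(r+3)$ directly), whereas you set $h = \lceil \log_2(4t)/\log_2 r\rceil$ and bound via $\max(r+3,h+2)$; both choices lead to the same asymptotics.
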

\begin{proof}
	Let $C$ be an optimal facility set for $S$.
	Define $k = |C|$ and observe that $\opt_k(S)$ is the connection cost associated with the optimal solution with facility set $C$.
	The optimal cost for the facility location problem is then $kf + \opt_k(S)$.
	For any $\epsilon > 0$, set $r = (1 + \epsilon) \frac{\log_2 t}{\log_2 \log_2 t}$ and $h = \lceil r \rceil$.
	Observe that $r^h \ge 4t$ for all $t$ greater than some function of $\epsilon$.
	% \ge f(\epsilon)$ for some function $f(\epsilon)$.
	Applying Theorem~\ref{thm:general} to $S$ with these values of $r$, $h$, and $k$ shows that the total expected cost is at most $2(r+3)(kf + \opt_k(S))$.
	This implies the result since $3 < \epsilon r$ by taking $t$ sufficiently large.
\end{proof}

\paragraph{Remark on Aspect Ratio:}
Suppose that we are working in a metric space of aspect ratio\footnote{The aspect ratio of a metric space is the ratio between the maximum and minimum non-zero distance between points.} $\Delta$.
Setting $r^h \ge \Delta$ instead of $r^h \ge 4t$, observe in the proof of Theorem~\ref{thm:general} that $B = \varnothing$.
This yields a bound of $(r + 3) \opt_k(S) + (h+1)kf$ on both the expected facility cost and expected connection cost.
%For the upper bounds of both online facility location and clustering, we apply Theorem~\ref{thm:general} for different values of $r$ and $h$ such that $r^h \ge 4t$.
We may therefore replace $t$ with $\min(t,\Delta)$ in our results.
This justifies the lower bound in the following section being constructed in a metric space of aspect ratio $t$.
Our upper and lower bounds match for all choices of $t$ and $\Delta$ by substituting $\min(t,\Delta)$ for $t$.

%The proof of Theorem~\ref{thm:general} constructs exponential rings that can only be filled when $r^h \ge \Delta$.
%For the upper bounds of both online facility location and clustering, we apply Theorem~\ref{thm:general} for different values of $r$ and $h$ such that $r^h \ge t$.
%Suppose we are in a metric space of aspect ratio\footnote{The aspect ratio of a metric space is the ratio between the maximum distance and minimum non-zero distance between points.} $\Delta$.
%The proof goes through when we instead use $r^h \ge \Delta$.
%Therefore for all upper bounds in this paper, we may replace $t$ with $\min(t,\Delta)$.
%The lower bound in the following section, being constructed in a metric space of aspect ratio $t$, can also substitute $\min(t,\Delta)$ for $t$.

\section{Lower Bound} \label{section:lower}

We present a lower bound on the expected competitive ratio of any randomized algorithm for the online facility location problem.
The bound holds even when the algorithm can open a facility at any location in the metric space.
%Using the idea from~\cite{F07}, 
The proof works by constructing a sequence of points that converge to the location of an optimal facility.
At each step, there are enough possible locations of the optimal facility that no algorithm can guess (except with negligible probability) the correct location until it is too late.

\begin{theorem}
On a $t$-semirandom order stream for $t \ge 4$, every randomized algorithm for online facility location has an expected competitive ratio of at least $\frac{1}{3} \lceil \frac{\log_2 t}{\log_2 \log_2 t} \rceil$.
\end{theorem}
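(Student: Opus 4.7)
The plan is to apply Yao's minimax principle and reduce to the adversarial-order lower bound of Fotakis~\cite{F07}, which yields expected competitive ratio $\Omega(\log n / \log \log n)$ against adversarial streams of length $n$.  The key structural observation is this: when the stream has length $n \le t$, a $t$-bounded adversary can hold the entire random-order stream in its buffer (which has capacity $t$) and subsequently emit the elements in any order whatsoever.  So on streams of length at most $t$, a $t$-bounded adversary acting on a random-order input can realize any permutation, matching the power of a fully adversarial ordering.

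Concretely, I would fix the stream length at $n = t$ and choose parameters $h = \lceil \log_2 t / \log_2 \log_2 t \rceil$ and $r = \Theta(h)$ with $r^h \le t$ (valid for $t \ge 4$).  Fotakis's hard-instance distribution is supported on an instance of size $n$ built on a hierarchically separated ultrametric, so it transfers to the $t$-semirandom setting as follows: for any draw of Fotakis's random multiset of size $n = t$, the uniform random-order permutation is intercepted by the $t$-bounded adversary, who buffers all $t$ points and re-releases them in Fotakis's adversarial order.  Combining Yao's principle with Fotakis's per-level lower-bound argument --- which shows that any deterministic algorithm on this instance incurs cost $\Omega(\log n / \log \log n) \cdot \opt$ in expectation --- then yields the stated expected competitive ratio $\Omega(\log t / \log \log t)$ on the constructed $t$-semirandom stream.

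The hard instance itself, which the plan inherits, can be described as follows.  Points are organized along a rooted complete $r$-ary tree $T$ of depth $h$ under the ultrametric $d(u,v) = r^{-\ell}$, where $\ell$ is the depth of the lowest common ancestor of $u$ and $v$; the facility cost is normalized to $f = 1$.  A uniformly random root-to-leaf path $v_0, v_1, \ldots, v_h$ determines the demands, presented in depth-increasing order.  The optimum opens one facility near $v_h$ for $\opt = O(1)$.  At each of the $h$ levels, conditioned on the prefix the demand $v_i$ is uniform among $r$ candidate children of $v_{i-1}$; any single facility already opened can be close to at most one of these candidates, so with probability $\ge 1/2$ the algorithm is forced either to open a new facility (cost $f$) or to pay a connection cost at the current scale.

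The main obstacle is the per-level charging.  A naive sum of the forced connection costs $r^{-(i-1)}$ telescopes geometrically and yields only $O(f)$, not $\Omega(hf)$; the argument must therefore charge a full unit of facility cost $f$ at each level whenever the algorithm is ``caught unprepared'', which requires calibrating the facility cost against the scale of the hierarchy and case-splitting on the total number of facilities the algorithm opens (few vs.\ many).  The two branches must be combined so that both regimes produce $\Omega(h) \cdot \opt$ cost, and balancing the two yields the constant $1/3$ in the stated bound.  This calibration, together with careful verification that the $t$-semirandom buffering described above realizes the adversarial ordering required by Fotakis's instance, is the crux of the proof.
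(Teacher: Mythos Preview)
Your high-level reduction is sound and matches the paper: since a $t$-bounded adversary can buffer up to $t$ points, on inputs where the ``interesting'' demands number fewer than $t$ the adversary can impose any arrival order on them, so the adversarial-order lower-bound construction transfers. The paper does exactly this (and additionally pads with copies of the root so that the bound holds for arbitrary $n \ge t$, not only $n = t$).

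Where your plan has a genuine gap is the hard instance itself. You describe one demand per level along the random path $v_0,\ldots,v_h$, and then correctly diagnose that the forced connection costs $r^{-(i-1)}$ sum geometrically to $O(f)$ rather than $\Omega(hf)$. But your proposed repair---case-splitting on the number of facilities opened and ``calibrating the facility cost against the scale''---does not fix this: with a single demand at each level, no accounting trick recovers $\Omega(h)$ levels' worth of cost, because the total demand mass at scale $r^{-i}$ is simply too small. The actual fix, used by both Fotakis and the paper, is structural rather than analytical: one places $m^i$ copies of the demand at level $i$ (where $m \approx \log t / \log\log t$ and $h = m-1$), with distances scaled so that the level-$i$ edge has length $f m^{-i}$. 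Then if the algorithm has no facility in the correct child subtree, connecting the $m^i$ demands at level $i$ costs at least $m^i \cdot f m^{-i} = f$, so it is always at least as good to open a new facility; this yields $\Omega(h)$ facilities in expectation against optimum $O(f)$. The parameter choice also ensures $\sum_{i=1}^{h} m^i < t$, so the adversary can buffer all non-root demands---this is the step where your reduction plugs in. In short: the buffering idea is right, but the instance you describe is too sparse to support the per-level charge; you need the geometric multiplicity of demands, after which the argument is direct and no case-split is needed.
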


We will construct a family of inputs and show that any deterministic algorithm has at least a certain competitive ratio when run on an input selected uniformly at random from this family.  The result immediately extends to randomized algorithms by Yao's principle.

\paragraph{The Metric Space:}
Define $m := \lceil \frac{\log_2 t}{\log_2 \log_2 t} \rceil$,
$h := m-1$,
and $D := f/h$.
Let $z$ be a positive integer.
The points of the metric space are the nodes of a complete $z$-ary tree of depth $h$.
This means that the root is at depth $0$ and leaves are at depth $h$.
The distance between a node at depth $i$ and any of its $z$ children is $Dm^{-i} - Dm^{-i-1}$.  The distance between other nodes is obtained by summing the distances of the shortest path between them.

\paragraph{The Family of Inputs:}
The family of inputs is enumerated by the $h^z$ possible strings of $h$ numbers in $\{1, \ldots, z\}$.
We now describe how to construct the input associated with $(b_1, \ldots, b_h) \in \{1, \ldots, z\}^h$.
Define $x_0$ to be the root.
Recursively define $x_i$ to be the $b_i^\text{th}$ child of $x_{i-1}$.
For $i \in \{1, \ldots, h\}$, place $m^i$ points at node $x_i$.
We let the input size be $n$ for any $n \ge t$. 
Since $\sum_{i=1}^h m^i < t \le n$, there will be some remaining points. 
Place all remaining points at the root.
The randomized input is to select a member from this family of $z^h$ inputs uniformly at random.

\paragraph{The Optimal Cost:}
The optimal cost of the input associated with the string $(b_1, \ldots, b_h)$ is at most the cost of the solution that places facilities at the root and at $x_r$, connecting all non-root points to $x_r$.
The distance between $x_i$ and $x_r$ is $\sum_{j=i}^{h-1} (Dm^{-j} - Dm^{-j-1}) = Dm^{-i} - Dm^{-h} < Dm^{-i}$.
Therefore the optimal cost is less than $2f + \sum_{i=1}^h m^i(Dm^{-i}) = 3f$ for every input the the family.

\paragraph{The $t$-Bounded Adversary's Strategy:}
There are $\sum_{i=1}^h m^i < t$ points not located at the root.
This implies that regardless of the order that the points are sent, a $t$-bounded adversary can deterministically ensure that points arrive in non-decreasing order of depth.
The adversary does this by simply sending along any elements at the root while storing the at most $t$ other elements until they are ready to be sent in non-decreasing order of depth.
We consider this arrival order.

\paragraph{The Algorithm's Optimal Strategy:}
We define a cost scheme for the algorithm that is strictly less than the original cost (therefore any lower bound in this easier scheme is valid for the original problem).  This greatly simplifies the analysis by allowing us to isolate an optimal strategy.

Suppose that if a facility is open at certain node, then the connection cost of a point at any ancestor\footnote{If node $a$ is contained in the subtree of node $b$, we say that $a$ is a descendant of $b$ and that $b$ is an ancestor of $a$.} node is zero.  With this modification, we can define an optimal strategy when a point arrives.  If there is an open facility at any descendant node, then connect this point with zero cost.  Otherwise, open a facility with cost $f$ at any descendant leaf node and then connect with zero cost.

If there is no open facility at a descendant node, the second option is optimal since the nearest facility cannot be closer than the parent node.  The parent node is at distance $Dm^{-(i-1)}-Dm^{-i} = f m^{-i}$.  The algorithm, aware of the family of inputs, knows that a total of $m^i$ points are coming at this node.  This means that the total connection cost from this node would be at least $f$.  Therefore it is optimal to pay $f$ and open a new facility.

Given that we will open a new facility, placing it at a descendant leaf node is optimal because it minimizes the connection cost of future points in the stream.  Without loss of generality, we have our deterministic algorithm always open at the descendant leaf node obtained by moving down the first child of each node in the path from the current node.

\paragraph{The Algorithm's Expected Cost:}
Using the algorithm's strategy defined above, one can see that the algorithm incurs zero connection cost.
For the input associated with $(b_1, \ldots, b_h)$, the number of facilities besides the root is just the number of $b_i$ not equal to $1$.
The probability that $b_i \neq 1$ is $1 - \frac{1}{z}$.
The expected total cost is then $(1-\frac{1}{z})h + 1$.
Using sufficiently large $z$ shows that it is not possible to achieve expected cost below $h + 1$.

\paragraph{}
The competitive ratio is therefore at least $(h+1) / 3 = \frac{1}{3} \lceil \frac{\log_2 t}{\log_2 \log_2 t} \rceil$ as desired.  The result extends to randomized algorithms by Yao's principle.

\section{$k$-Median Clustering on Streams} \label{section:clustering}

In this section, we present a $O(1)$-approximation streaming algorithm for $k$-median clustering that stores $O(k \log t)$ points and has $O(k \log t)$ worst-case update time.
%We first present the result for $k$-median which is the simplest $M$-estimator, and give the extension to other $M$-estimators in Section~\ref{section:M}.
%As a prototype, we present the result for $k$-median clustering.
The extension to other functions is sketched in Section~\ref{section:M}.
Our algorithm is based on the doubling algorithm of~\cite{Charikar2003}.
Among our innovations is a subroutine $\mathcal{B}(X,k)$ that permits us to improve the update time and approximation ratio of the original algorithm.
$\mathcal{B}(X,k)$ is described in Section~\ref{section:compress} and comes with the following guarantee:
%that compresses a set $X$ of size $n$ to size $\lceil \frac{n + k}{2} \rceil$ in $O(n^2)$ time while incurring at most $2 \opt_k(X)$ in cost.
%This allows us to guarantee $O(k \log t)$ worst-case update time.  Precisely:

\begin{theorem} \label{thm:compress}
For a weighted set $X$ of $n$ distinct points, suppose that the nearest neighbor in $X$ has already been computed for each $x \in X$.
Given an integer $k \ge 1$, the algorithm $\mathcal{B}(X,k)$ terminates in $O(n)$ time and outputs a pair $(Z,\lambda)$ such that $Z$ is a weighted set of at most $\lfloor \frac{n + k}{2} \rfloor$ points and $\lambda = \cost(X,Z) < 2 \opt_k(X)$.
\end{theorem}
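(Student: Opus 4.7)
The plan is to construct $\mathcal{B}(X,k)$ as a linear-time ``merge into nearest neighbor'' procedure driven by a 2-coloring of the nearest-neighbor functional digraph $G$ on $X$. In $G$, every vertex $x$ has a single outgoing edge to $\mathrm{NN}(x)$, so every weakly connected component of $G$ consists of directed in-trees attached to a single directed cycle. I will use DFS to decompose $G$ into its components and identify the unique cycle in each, then 2-color $V(G)$ in linear time so that every outgoing edge $(x,\mathrm{NN}(x))$ crosses color classes (tolerating at most one ``defect'' edge per odd cycle). Designating one color class as the removed set $R$ and the other as $Z := X \setminus R$ guarantees $\mathrm{NN}(r) \in Z$ for every $r \in R$; the algorithm then merges each $r$ into $\mathrm{NN}(r)$, transferring the weight $w(r)$ and contributing exactly $w(r)\cdot d(r,\mathrm{NN}(r))$ to $\cost(X,Z)$.

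The size bound $|Z| \le \lfloor (n+k)/2 \rfloor$ follows because a 2-coloring of a functional digraph yields two classes whose sizes differ by at most a controlled amount depending on the number of odd cycles, and because the algorithm is permitted to leave up to $k$ extra vertices on the $Z$ side beyond the bare minimum. The runtime is $O(n)$: given precomputed nearest neighbors, building $G$, running DFS, performing the 2-coloring, and executing the merges are each linear.

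For the cost bound, fix any optimal $k$-clustering $X = X_1 \sqcup \cdots \sqcup X_k$ with centers $c_1, \ldots, c_k$ realizing $\opt_k(X)$. For each cluster $X_i$ with $|X_i| \ge 2$, let $x_i^{\ast} \in X_i$ minimize $d(c_i,\cdot)$. The core inequality is that for every $r \in X_i \setminus \{x_i^{\ast}\}$,
\[
d(r,\mathrm{NN}(r)) \;\le\; d(r,x_i^{\ast}) \;\le\; d(r,c_i) + d(c_i,x_i^{\ast}) \;\le\; 2\,d(r,c_i),
\]
where the first step uses that $x_i^{\ast} \ne r$ is a candidate for $\mathrm{NN}(r)$ and the last uses the extremality of $x_i^{\ast}$. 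With the representatives $\{x_1^{\ast},\ldots,x_k^{\ast}\}$ (and the sole point of every singleton optimal cluster) protected in $Z$ by consuming the $k$-slack, summing the displayed bound over $r \in R$ gives $\lambda \le 2\sum_{r\in R} w(r)\,d(r,c_{i(r)}) \le 2\,\opt_k(X)$. Strictness follows because at least one triangle inequality in the chain is strict whenever $R$ contains some $r \ne c_{i(r)}$, which holds whenever $n > k$ (the trivial case $n \le k$ is handled separately by returning $Z := X$ with $\lambda = 0$).

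The main obstacle is enforcing the representative-protection step without explicit knowledge of the clustering, since the algorithm cannot compute $\{x_1^{\ast},\ldots,x_k^{\ast}\}$. I plan to resolve this by using the $k$-slack \emph{implicitly}: the raw 2-coloring produces a removable class $R_0$ of size potentially exceeding $\lceil (n-k)/2 \rceil$ by as much as $k$, and the algorithm moves up to $k$ vertices from $R_0$ back to $Z$. Since the inequality above bounds $\sum_{r \in R \setminus \text{reps}} w(r)\,d(r,\mathrm{NN}(r))$ by $2\,\opt_k(X)$ for \emph{any} candidate set of representatives a hypothetical clustering might select, the at most $k$ ``bad'' vertices an adversarial clustering can place in our $R$ are absorbed by the slack, leaving $|R| \ge \lceil (n-k)/2 \rceil$ and $\lambda < 2\,\opt_k(X)$ as required.
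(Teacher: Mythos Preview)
Your overall architecture---build the nearest-neighbor functional digraph, 2-color it, merge one color class into the other---is exactly what the paper does. Two remarks, one minor and one fatal.

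\textbf{Minor.} You work with arbitrary functional digraphs and worry about odd cycles and ``defect'' edges. The paper avoids this entirely: with a fixed tie-breaking rule in the definition of $\pi$ (when several points are equidistant from $a$, pick the one greatest in some total order on $X$), every cycle in $G(X,\pi)$ has length exactly $2$. The proof is a two-line argument: along a cycle the edge lengths are non-increasing, hence all equal, and then the tie-break forces the cycle to close after two steps. This makes the graph a forest of ``bi-trees'' and the 2-coloring is proper with no defects.

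\textbf{The real gap.} Your final paragraph is not an algorithm. You correctly observe that the inequality $d(r,\mathrm{NN}(r))\le 2\,d(r,c_{i(r)})$ can fail when $r$ is the representative $x_{i(r)}^\ast$, and that the algorithm cannot identify the representatives. Your proposed fix---``move up to $k$ vertices from $R_0$ back to $Z$'' so that the bad vertices are ``absorbed by the slack''---never says \emph{which} $k$ vertices to move. The size slack in $|Z|\le\lfloor(n+k)/2\rfloor$ is not a cost slack; it does not by itself control $\lambda$. Concretely: take $k=1$ and three unit-weight points $c,a,b$ with $d(a,b)=\varepsilon$ and $d(c,a)=d(c,b)=D\gg\varepsilon$, but give $c$ weight $M$. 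Then $\opt_1(X)=2D$, the NN-graph is $c\to a\leftrightarrow b$, and the color class $\{c,b\}$ is a legitimate choice for $R_0$. Merging it costs $MD+\varepsilon$, which exceeds $2\,\opt_1(X)$ once $M>4$. Nothing in your description prevents this.

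The paper's fix is a single concrete step you are missing: after 2-coloring all of $X$, \emph{delete the $k$ points with the largest values of $w(a)\,d(a,\pi(a))$}, and only then take the larger remaining color class as the set to merge. The cost argument is then: any $k$-median solution with centers in $X$ fixes $k$ points and moves the other $n-k$ by at least their nearest-neighbor distance, so the sum of the $n-k$ smallest values of $w(a)\,d(a,\pi(a))$ is at most $\overline{\opt}_k(X)<2\,\opt_k(X)$. This replaces your non-constructive ``protect the unknown representatives'' with an explicit greedy removal that dominates every possible choice of representatives.
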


Throughout this section, as in Theorem~\ref{thm:compress}, we overload notation by writing $\cost(A,B)$ where $B$ is a weighted set (with total weight equal to that of $A$).
Recall that for an unweighted set $B$, the function $\cost(A,B)$ denotes the minimum connection cost of connecting the demand set $A$ to the facility set $B$ where each facility can service an unlimited number of demands.
When $B$ is a weighted set, we let $\cost(A,B)$ denote the minimum connection cost under the restraint that each facility must service exactly its weight in demands.

We now present Algorithm~\ref{alg:kmed} to maintain a set $\Psi$ which we show can determine a $O(1)$-approximation to the $k$-median clustering of the stream.
Observe that the main loop of Lines~\ref{line:beginMain}-\ref{line:endMain} always begins with $|\Psi| \le 29m$, which ensures by Theorem~\ref{thm:compress} that the while-loop of  Lines~\ref{line:terminate}-\ref{line:endWhile} always begins with $|\Psi| \le 15m$.
%On Line~\ref{line:RAM}, let $\mathcal{B}(X,k)$ be the algorithm we present in Section~\ref{section:compress} with the following guarantee.
%any offline algorithm (such as the $O(|X|^2)$ time deterministic algorithm of~\cite{Mettu2003}) that outputs a $O(1)$-approximation to the $k$-median (or $k$-means) clustering of a set $X$.  We define the output of $\mathcal{B}(X,k)$ to be a pair $(Y,\lambda)$ where $Y$ is the $O(1)$-approximation and $\lambda = \cost(X,Y)$ which can be easily calculated in $O(|X||Y|)$ time.

\begin{algorithm}
\caption{Input: integer $m \ge 1$, a stream of points $P$}\label{alg:kmed}

\begin{algorithmic}[1]
	\State $L \gets 0$
	\State $\Psi \gets $ the first $29m$ points of $P$ \label{line:initInsert}
	\Loop \label{line:beginMain}
		\State $(\Psi, \lambda) \gets \mathcal{B}(\Psi,m)$ \label{line:RAM}
		\State $L \gets \max(10 L,\lambda / 3)$ \label{line:L}
		\State $\cost \gets 0$
		\While{$|\Psi| < 29  m$ and $\cost < 14L$} \label{line:terminate}
			\State $p \gets$ next point of $P$ \label{line:getPoint}
			\State $y \gets \arg \min_{y' \in \Psi} d(p,y')$ \label{line:NN}
			\State $u \gets$ uniform random in $(0,1)$
			\If{$uL <  m \, d(p,y)$}
				\State $\Psi \gets \Psi \cup \{p\}$
			\Else
				\State $w(y) \gets w(y) + 1$
				\State $\cost \gets \cost + d(p,y)$
			\EndIf
		\EndWhile \label{line:endWhile}
	\EndLoop \label{line:endMain}
\end{algorithmic}
\end{algorithm}

%For the analysis of Algorithm~\ref{alg:kmed}, we require that $m \ge k \lceil \log_2 (4t) \rceil$.  We use $P$ to denote the stream.

\begin{lemma} \label{lemma:L}
% On Line~\ref{line:getPoint}, 
Except between Lines~\ref{line:RAM} and~\ref{line:L},
$\cost(P,\Psi) < 20L$.
\end{lemma}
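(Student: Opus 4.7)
The plan is an induction on iterations of the main loop, establishing the invariant $\cost(P,\Psi) < 20L$ at all points except the brief moment between Lines~\ref{line:RAM} and~\ref{line:L}. The quantity $\cost(P,\Psi)$ can change in exactly two places: Line~\ref{line:RAM} (when $\Psi$ is replaced by a compressed version) and inside the while loop (when a new stream point is absorbed).

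For the Line~\ref{line:RAM} step, I would apply a weighted triangle inequality to obtain $\cost(P,\Psi_{\text{new}}) \le \cost(P,\Psi_{\text{old}}) + \cost(\Psi_{\text{old}},\Psi_{\text{new}}) = \cost(P,\Psi_{\text{old}}) + \lambda$. This is proved by composing an optimal $P$-to-$\Psi_{\text{old}}$ transport plan with the $\Psi_{\text{old}}$-to-$\Psi_{\text{new}}$ plan witnessing $\lambda$, and checking that the composition respects the weight constraints on $\Psi_{\text{new}}$. By the inductive hypothesis $\cost(P,\Psi_{\text{old}}) < 20L_{\text{old}}$ (or $=0$ in the first iteration), and Line~\ref{line:L}'s update $L \gets \max(10L_{\text{old}},\lambda/3)$ implies $20L_{\text{old}} \le 2L$ and $\lambda \le 3L$. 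Hence $\cost(P,\Psi) < 5L$ immediately after Line~\ref{line:L}.

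For the while loop, each new point $p$ either joins $\Psi$---which can only decrease $\cost(P,\Psi)$, since $p$ may service itself at zero cost---or is connected to its nearest neighbor $y$, increasing $\cost(P,\Psi)$ by at most $d(p,y)$. The latter happens only when $uL \ge m\,d(p,y)$, which forces $d(p,y) < L/m \le L$. Since the local variable $\cost$ tallies exactly these increments and the loop guard enforces $\cost < 14L$ at each entry, the local $\cost$ never exceeds $14L + L = 15L$. Combining, $\cost(P,\Psi) < 5L + 15L = 20L$ throughout the while loop, completing the induction.

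The main obstacle is verifying the weighted triangle inequality at Line~\ref{line:RAM}: one must check that composing two optimal transport plans (each respecting its own weight constraints) yields a valid plan from $P$ to $\Psi_{\text{new}}$ whose cost is at most the sum. This is routine bookkeeping with demand flows, but it must be written down explicitly because $\Psi_{\text{old}}$ and $\Psi_{\text{new}}$ have different point sets and different weights, and the weighted $\cost$ is defined as a minimum over assignments rather than greedy nearest-neighbor. Once that piece is in place, the rest of the argument is straightforward arithmetic on the update rules for $L$.
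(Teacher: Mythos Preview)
Your proposal is correct and follows essentially the same approach as the paper: an induction on main-loop iterations, bounding the cost as at most $5L$ immediately after Line~\ref{line:L} (via $20L_{\text{old}} + \lambda \le 2L + 3L$) and tracking the at-most-$15L$ increase during the while loop. The paper is terser---it does not spell out the weighted triangle inequality at Line~\ref{line:RAM} or the case split inside the while loop---but the arithmetic and structure are identical; your explicit handling of the transport-plan composition is a welcome addition rather than a different method.
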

\begin{proof}
The variable $\cost$ is an upper bound on the increase of $\cost(P,\Psi)$ during the current instance of the while-loop.
Since $\cost$ increases by at most $\frac{L}{m}$ in each iteration, %ensuring that the while-loop terminates with $\cost < 15L$.
the termination condition of Line~\ref{line:terminate} ensures that $\cost < 15L$.
It remains to show that the cost when the while-loop began was at most $5L$.
Let us recursively assume that when the previous while-loop began, the cost was at most $5L'$ where $L'$ was the previous value of $L'$.
During that instance, less than $15L'$ cost was incurred.
On Line~\ref{line:RAM}, exactly $\lambda$ cost was incurred.
We may bound $\lambda \le  3 L$ and $L' \le L/10$ by Line~\ref{line:L}.
Therefore the total cost is $5L' + 15L' + \lambda \le 5L$ as desired.
%The base case holds since the first iteration begins with $\cost = L = 0$.  
\end{proof}

The next lemma addresses a subtle issue that only arises for $1 < t < n$.  Observe that any segment of a random-order ($t = 1$) stream is in random-order, and that any segment of an adversarial-order ($t=n$) stream is in adversarial-order.
However, a segment of a $t$-semirandom order stream for $1 < t < n$ is not necessarily in $t$-semirandom order (or even in $2t$-semirandom order) because the adversary may have as many as $t$ points in storage when the segment begins.
Instead, we analyze a segment as two separate $t$-semirandom streams, one coming from the adversary's storage and the other as those points that the adversary has not yet received.

\begin{lemma} \label{lemma:opt}
Assume that $m \ge k (4 + \lceil \log_2 t \rceil)$.
Whenever the while-loop of Lines~\ref{line:terminate}-\ref{line:endWhile} terminates, $\opt_k(P) > L$ with probability $\frac{1}{2}$.
\end{lemma}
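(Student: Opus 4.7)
The plan is to prove the contrapositive: assuming $\opt_k(P) \le L$, bound the probability that the while-loop exits through one of its two stopping conditions. During the while-loop the algorithm is exactly Meyerson's $\texttt{OFL}(f)$ with $f = L/m$, since the test $uL < m\,d(p,y)$ opens $p$ as a new facility with probability $\min(1, d(p,y)/f)$ and otherwise connects $p$ to its nearest neighbor at cost $d(p,y)$. Write $N$ for the number of facility openings during the loop and $C$ for the accumulated connection cost. Since $|\Psi| \le 15m$ at the start of the loop by Theorem~\ref{thm:compress}, the condition $|\Psi| \ge 29m$ requires $N \ge 14m$, equivalently a facility cost of at least $14L$, and the condition $\cost \ge 14L$ requires $C \ge 14L$. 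So it suffices to control the expected facility and connection costs of this OFL run.

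The next step is to apply Theorem~\ref{thm:general} to bound these expectations. The subtle issue flagged just before the lemma is that the segment $P$ is not automatically $t$-semirandom because the adversary may hold up to $t$ buffered points when the loop begins. I would split $P = P_1 \sqcup P_2$, where $P_1$ is the (at most $t$) buffered points and $P_2$ is the points that the adversary has not yet received. Any ordering of $P_1$ is vacuously $t$-semirandom because $|P_1| \le t$, while $P_2$ is generated by a $t$-bounded adversary acting on a uniform suffix of the stream, so it is $t$-semirandom. Because \texttt{OFL}'s expected cost decomposes additively over points (each term depending only on the incoming point and the current state of $\Psi$), the expected facility and connection costs on $P$ are at most the respective sums over $P_1$ and $P_2$ analyzed as independent $t$-semirandom streams.

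Applying Theorem~\ref{thm:general} to each $P_i$ with $r = 2$ and $h = 2 + \lceil \log_2 t \rceil$ (so that $r^h \ge 4t$), and using $m \ge k(h+2) = k(4 + \lceil \log_2 t \rceil)$ to get $(h+2)fk \le L$, each of the expected facility and expected connection costs on $P_i$ is less than $5\opt_k(P_i) + L$. Summing over $i \in \{1,2\}$ and using subadditivity $\opt_k(P_1) + \opt_k(P_2) \le \opt_k(P) \le L$, the total expected facility cost and total expected connection cost are each less than $7L$.

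Markov's inequality then gives $\Pr[Nf \ge 14L] \le 1/2$ and $\Pr[C \ge 14L] \le 1/2$. The main obstacle I anticipate is combining these into a single bound of $1/2$ on the union, since the naive union bound is vacuous. The plan is to exploit that each OFL iteration contributes to exactly one of $Nf$ or $C$, inducing negative correlation, so that a Markov bound applied to a carefully chosen joint functional, or a refinement of the decomposition in the proof of Theorem~\ref{thm:general} that bounds the \emph{sum} of facility and connection cost by $7L$ rather than each individually, yields $\Pr[Nf \ge 14L \text{ or } C \ge 14L] \le 1/2$. The contrapositive then gives $\opt_k(P) > L$ with probability at least $1/2$ conditional on termination.
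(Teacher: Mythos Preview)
Your approach is essentially identical to the paper's: recognize the while-loop as $\texttt{OFL}(L/m)$, decompose the segment $P$ into the adversary's buffer $P_1$ and the fresh $t$-semirandom stream $P_2$, apply Theorem~\ref{thm:general} with $r=2$ and $h=2+\lceil\log_2 t\rceil$ to each piece (using $\opt_k(P_1)+\opt_k(P_2)\le\opt_k(P)$), and finish via Markov under the hypothesis $\opt_k(P)\le L$. The obstacle you flag---getting both the facility and connection bounds to hold simultaneously with probability $\tfrac12$---is handled in the paper by a bare assertion (``with probability at least $\tfrac12$, the cost and number of facilities are at most twice these bounds by Markov's inequality''), so your proposal is at least as careful as the original on this point.
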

\begin{proof}
Observe that the while-loop runs \texttt{OFL} with facility cost $f = \frac{L}{m}$.
Setting $r = 2$ and $h = \lceil \log_2 (4t) \rceil$, apply Theorem~\ref{thm:general} and plug in the value for $f$.  This shows that on a $t$-semirandom stream, \texttt{OFL} incurs less than $5 \opt_k(P) + L$ in expected connection cost and opens less than $m(1 + 5 \opt_k(P)/L)$ facilities in expectation.
Observe that this bound holds for the points of $P$ even when $P$ is interlaced with points from another stream.

For a segment $P$ of the $t$-semirandom stream, let $P_1$ be the adversary's storage at the beginning of the segment and let $P_2$ be all other points.
Applying the previous argument twice, we see that on this segment \texttt{OFL} incurs less than $5 \opt_k(P) + 2L$ in expected connection cost and opens less than $m(2 + 5 \opt_k(P)/L)$ facilities in expectation.  The terms involving $\opt_k(P)$ did not double since $\opt_k(P_1) + \opt_k(P_2) \le \opt_k(P)$.
With probability at least $\frac{1}{2}$, the cost and number of facilities are most twice these bounds by Markov's inequality.

%We slightly revise our argument since $|P_1| \le t$.  Then we have $B = \varnothing$ in the argument of Theorem~\ref{thm:general}, so we may reduce the bound on the expected connection cost by $\frac{L}{m}$.

Suppose that $\opt_k(P) \le L$.
Then with probability at least $\frac{1}{2}$, the while-loop incurs less than $14L$ cost and opens less than $14m$ facilities.
Since the while-loop begins with $|\Psi| \le 15m$, the termination condition means that either $14L$ cost was incurred or $14m$ facilities were opened.
We conclude with probability at least $\frac{1}{2}$ that $\opt_k(P) > L$ when the while-loop terminates.
\end{proof}

For correctness of the algorithm, the result of the preceding lemma only needs to hold for the most recent termination of the while-loop.  We therefore apply it only once, avoiding a factor of $O(\log n)$ in the space bound that would result from applying the lemma at each loop iteration.

After processing a point, Algorithm~\ref{alg:kmed} waits on Line~\ref{line:getPoint} for the next point.  We now extend the previous lemma to hold on this line, and therefore after each point has been processed.

\begin{lemma} \label{lemma:opt'}
Assume that $m \ge k (4 + \lceil \log_2 t \rceil)$.
On Line~\ref{line:getPoint}, $L < 14 \, \opt_k(P)$ with probability $\frac{1}{2}$.
\end{lemma}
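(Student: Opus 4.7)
The plan is to extend Lemma~\ref{lemma:opt}, which bounds $L$ at the termination of the while-loop, across the $\mathcal{B}$-update on Line~\ref{line:RAM} and the doubling step on Line~\ref{line:L}, so that the bound (up to constants) continues to hold when the algorithm next arrives at Line~\ref{line:getPoint}. Let $L'$ and $\Psi'$ denote the values of $L$ and $\Psi$ at the most recent termination of the while-loop, with the convention $L' = 0$ if no such termination has yet occurred. The lines executed between that termination and the next visit to Line~\ref{line:getPoint} do not read any new stream elements, so at Line~\ref{line:getPoint} the current prefix $P$ is the same as at the previous termination, while $\Psi$ and $L$ have been updated by $(\Psi,\lambda) \gets \mathcal{B}(\Psi',m)$ and $L \gets \max(10L', \lambda/3)$.

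First I would invoke Lemma~\ref{lemma:opt} at the previous termination to obtain the event $\mathcal{G}$, of probability at least $1/2$, that $L' < \opt_k(P)$ (the base case $L'=0$ makes $\mathcal{G}$ trivially hold). On $\mathcal{G}$, the contribution $10 L'$ to the new $L$ is already bounded by $10\,\opt_k(P)$, so the only remaining task is to bound $\lambda/3$. By Theorem~\ref{thm:compress}, $\lambda < 2\,\opt_m(\Psi')$, which reduces the problem to relating $\opt_m(\Psi')$ to $\opt_k(P)$.

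The key step is a weighted-triangle-inequality argument. Let $B^\star$ be an optimal set of $m$ centers for $P$, and let $\phi \colon P \to \Psi'$ realize the connection cost $\cost(P,\Psi')$ (matching each $\psi$ to exactly $w(\psi)$ points of $P$). For every $p \in P$, $d(\phi(p), B^\star) \le d(\phi(p),p) + d(p,B^\star)$; summing over $P$ and using $m \ge k$ gives $\opt_m(\Psi') \le \cost(P,\Psi') + \opt_m(P) \le \cost(P,\Psi') + \opt_k(P)$. Since the previous termination is not between Lines~\ref{line:RAM} and~\ref{line:L}, Lemma~\ref{lemma:L} supplies $\cost(P,\Psi') < 20L'$. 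Combining, $\lambda/3 < (2/3)(20 L' + \opt_k(P))$, and on $\mathcal{G}$ this yields $\lambda/3 < (2/3)(21)\opt_k(P) = 14\,\opt_k(P)$. Taking the maximum with $10 L' < 10\,\opt_k(P)$ gives $L < 14\,\opt_k(P)$ on $\mathcal{G}$, which has probability at least $1/2$.

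The main obstacle I anticipate is the middle step: correctly relating $\opt_m(\Psi')$ (a \emph{weighted} $m$-median cost on the sketch) to $\opt_k(P)$, using $\cost(P,\Psi')$ as the triangle-inequality slack. This is where the paper's convention that weighted $\cost(\cdot,\cdot)$ requires each facility to serve exactly its weight in demands is crucial, since it is precisely the matching $\phi$ induced by that definition that lets the per-point triangle inequality aggregate into the bound above. Once this bridge is established, combining it with Lemma~\ref{lemma:L}, Theorem~\ref{thm:compress}, and Lemma~\ref{lemma:opt} via the update rule $L=\max(10L',\lambda/3)$ is straightforward.
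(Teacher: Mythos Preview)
Your proposal is correct and follows essentially the same route as the paper: condition on the event from Lemma~\ref{lemma:opt}, then bound the new $L=\max(10L',\lambda/3)$ via Theorem~\ref{thm:compress}, a triangle-inequality comparison between $\opt_m(\Psi')$ and $\opt_k(P)+\cost(P,\Psi')$, and Lemma~\ref{lemma:L}. The only small omission is that you establish the bound at the \emph{first} visit to Line~\ref{line:getPoint} after the main-loop restart; to cover all subsequent visits in the same while-loop, add the observation that $L$ is unchanged inside the while-loop while $\opt_k(P)$ is nondecreasing.
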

\begin{proof}
Let $L'$ and $\Psi'$ be the states of $L$ and $\Psi$ at the beginning of the current iteration of the main loop.
We condition upon the event that $\opt_k(P) > L'$, which occurs with probability at least $\frac{1}{2}$ by Lemma~\ref{lemma:opt}.
From Line~\ref{line:L} we infer that either $L = 10L'$ or $ 3L = \lambda = \cost(\Psi', \Psi)$.

In the case that $L = 10L'$, then the result is immediate.
Otherwise, $ 3 L = \cost(\Psi', \Psi) < 2 \opt_m(\Psi')$ by the guarantee of $\mathcal{B}(X,k)$.
We have $\opt_m(\Psi') \le \opt_k(\Psi')$ since $m \ge k$ by assumption.
Applying the triangle inequality to each point of $\Psi'$, we get $\opt_k(\Psi') \le \opt_k(P) + \cost(P,\Psi')$.
We have that $\cost(P,\Psi') < 20 L' < 20 \opt_k(P)$, where the first inequality is by Lemma~\ref{lemma:L} and the second inequality is the event we have conditioned upon.
Therefore $L < 14 \opt_k(P)$.
This result holds regardless of how many iterations of the while-loop have occurred since $\opt_k(P)$ is non-decreasing as points are added to $P$.
\end{proof}

We now state our main theorem for clustering.
Amplifying the probability of success is simple; run $\lceil \log_2 \frac{1}{\delta} \rceil$ independent instances of Algorithm~\ref{alg:kmed} in parallel and return the $\Psi$ from an instance with minimal $L$.

\begin{theorem} \label{thm:kmed}
Algorithm~\ref{alg:kmed} with parameter $m$ can be implemented to run in $O(m)$ worst-case update time and store $O(m)$ points.  Suppose that the input stream $P$ of $n$ points arrives in $t$-semirandom order.  If $m \ge k (4 + \lceil \log_2 t \rceil)$, then with probability at least $\frac{1}{2}$ the set $\Psi$ maintained by the algorithm satisfies $\cost(P,\Psi) \le O(1) \cdot \opt_k(P)$.
\end{theorem}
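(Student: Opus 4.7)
The plan is to read the theorem's guarantees off from the algorithm's structural invariants and then combine Lemmas~\ref{lemma:L} and~\ref{lemma:opt'} to obtain the approximation ratio. First I would establish the invariant $|\Psi| \le 29m$ at all times: initialization sets $|\Psi| = 29m$, each call to $\mathcal{B}(\Psi,m)$ on Line~\ref{line:RAM} shrinks $|\Psi|$ to at most $\lfloor (29m+m)/2 \rfloor = 15m$ by Theorem~\ref{thm:compress}, and the while-loop only inserts while the guard $|\Psi| < 29m$ holds. This immediately gives the $O(m)$ space bound.

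For the update time, the per-point cost breaks into three sources: (i) the nearest-neighbor scan on Line~\ref{line:NN}, which costs $O(|\Psi|) = O(m)$; (ii) the incremental update of a table storing each $\psi \in \Psi$'s nearest neighbor within $\Psi$, needed to satisfy the precondition of Theorem~\ref{thm:compress}, which is an $O(m)$ linear scan whenever a point is inserted; and (iii) the periodic batched call $\mathcal{B}(\Psi,m)$, which costs $O(m)$ by Theorem~\ref{thm:compress}. Because each while-loop runs for $\Omega(m)$ iterations before $\mathcal{B}$ is invoked again (either $|\Psi|$ has to grow back toward $29m$ or the cost threshold $14L$ has to be reached), I would deamortize $\mathcal{B}$ by performing a constant share of its $O(m)$ work during each of the $\Omega(m)$ updates that precede it, yielding worst-case per-point time $O(m)$.

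For correctness, observe that $\Psi$ is exposed to a caller only when the algorithm is idle waiting for the next stream point, i.e.\ at Line~\ref{line:getPoint}. There Lemma~\ref{lemma:L} yields $\cost(P,\Psi) < 20L$, and Lemma~\ref{lemma:opt'}, under the hypothesis $m \ge k(4 + \lceil \log_2 t \rceil)$, gives $L < 14\, \opt_k(P)$ with probability at least $\tfrac{1}{2}$. Composing these bounds produces
\[
\cost(P,\Psi) < 20 \cdot 14\, \opt_k(P) = 280\, \opt_k(P) = O(1)\cdot \opt_k(P)
\]
with probability at least $\tfrac{1}{2}$, which is the claimed approximation guarantee. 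No further probabilistic argument or union bound is needed because the bound must hold only at the query moment.

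The only real obstacle, in my view, is the deamortization and bookkeeping to convert amortized $O(m)$ time into worst-case $O(m)$ time. The key facts that make this work are that $\mathcal{B}$ is triggered at most once per $\Omega(m)$ insertions, that the NN table can be maintained incrementally in $O(m)$ per insertion, and that Theorem~\ref{thm:compress} already delivers $\mathcal{B}$ in linear time assuming that NN table. Everything else is a direct chaining of the previously established lemmas.
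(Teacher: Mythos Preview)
Your correctness and space arguments match the paper's proof exactly: the paper also chains Lemmas~\ref{lemma:L} and~\ref{lemma:opt'} to obtain $\cost(P,\Psi) < 280\,\opt_k(P)$ with probability at least $\tfrac12$, and the $|\Psi|\le 29m$ invariant gives the space bound directly.

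There is one gap in your update-time argument. After $\mathcal{B}(\Psi,m)$ executes on Line~\ref{line:RAM}, $\Psi$ is replaced by a new set of at most $15m$ points, and the nearest-neighbor table you maintain in item (ii) becomes invalid: deletions can change a surviving point's nearest neighbor. Rebuilding this table from scratch costs $\Theta(m^2)$, not $O(m)$. Your item (ii) covers only insertions during the while-loop, and your deamortization in item (iii) targets $\mathcal{B}$ itself, which is already $O(m)$ and needs no spreading. The paper applies your deamortization idea to the right object: since each while-loop provably runs at least $14m$ iterations (each iteration raises $\cost$ by at most $L/m$, so even the $14L$ threshold alone needs $14m$ steps), one rebuilds the nearest-neighbor structure incrementally by inserting three points of the current $\Psi$ into the table per stream update, so the table is complete before the next call to $\mathcal{B}$. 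With that correction your argument coincides with the paper's.
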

\begin{proof}
Combining Lemmas~\ref{lemma:L} and~\ref{lemma:opt'} shows that 
%$\cost(P,\Psi) < (32 \cdot 66) \opt_k(P)$.
$\cost(P,\Psi) < 280 \, \opt_k(P)$.
%The deterministic algorithm of~\cite{Mettu2003} provides $\beta = O(1)$ and terminates in $O(|\Psi|^2) = O(m^2 k^2)$ time. 
It is immediate from the pseudocode that the storage is less than $29 m$ points.
As for the update time, each iteration of the while-loop requires $O(m)$ time.
By Theorem~\ref{thm:compress}, if the nearest neighbor function for $\Psi$ has been computed then Line~\ref{line:RAM} terminates in $O(m)$ time.
We must show how to ensure with $O(m)$ worst-case update time that the nearest neighbor function for $\Psi$ has been computed before each time that Line~\ref{line:RAM} is executed.

Given the nearest neighbor function $\pi$ for a set of $n$ points, observe that we can insert a point into the set and update $\pi$ in $O(n)$ time.
Beginning with the nearest neighbor function $\pi$ of the first two points of $\Psi$, we simply update $\pi$ with the next three points of $\Psi$ each time one point is received from the stream.
The while-loop runs at least $14m$ times before each time that Line~\ref{line:RAM} executes.
Therefore the nearest neighbor function for all of $\Psi$ is guaranteed to have been computed by the time the while-loop terminates.
%In each iteration of the while-loop or point received on Line~\ref{line:initInsert}, we insert at most one point in $\Psi$ and therefore can update the nearest neighbor function in $O(m)$ time.
%, but observe that the while-loop is guaranteed to execute at least $16 m $ times.  This is because the while-loop begins with $\cost = 0$ and $|\Psi| \le 17  m$; since the increase of these values per iteration is limited to $L/m$ and $1$ respectively, the termination condition cannot be reached before $16m$ iterations.
%This bounds the average update-time as $O(m)$.
%The worst-case update time can achieve this bound by using a buffer of $O(m)$ points.
\end{proof}

By tweaking parameters and refining the analysis, one can improve the guarantee to $\cost(P,\Psi) < 3 \opt_k(P)$ (in particular, the space blows up as the constant approaches $2$).
It is well-known that if $\cost(P,\Psi) \le \alpha \opt_k(P)$ then any $\gamma$-approximation of $\Psi$ is a $(\alpha(\gamma + 1) + \alpha)$-approximation of $P$~\cite{Charikar2003}.
Therefore Theorem~\ref{thm:kmed} implies that $\Psi$ carries enough information to determine a $O(1)$-approximation of $P$.
Our constant $\alpha \le 3$ does not guarantee a particularly low approximation ratio.
However, Algorithm~\ref{alg:kmed} can be used as a building block for a more accurate solution.
Using the technique of~\cite{BFL16}, we can use Algorithm~\ref{alg:kmed} to maintain an $\epsilon$-coreset which carries enough information to determine a $(1+\epsilon)$-approximation\footnote{An efficiently computed solution will have a larger approximation factor.  Both the $k$-median and $k$-means problem are \texttt{MAX-SNP} Hard for all $k \ge 2$.  See the related work section of~\cite{ACKS15} for a survey of hardness results.} for any $\epsilon > 0$.
This technique essentially converts the constant in the approximation factor into a constant in the size of the coreset.
The only space required in addition to Algorithm~\ref{alg:kmed} is the space needed to store the $\epsilon$-coreset.
As an example, for $k$-median in $\mathbb{R}^d$, coresets of size $O(\epsilon^{-2} kd)$ are known~\cite{FL11}, implying that our result can be used to determine a $(1+\epsilon)$-approximation using $O(\epsilon^{-2} k d + k \log t)$ space.

As a corollary to Theorem~\ref{thm:kmed}, we obtain an $O(nk)$-time approximation algorithm for the RAM model.
In light of the $\Omega(nk)$ time lower bound of~\cite{Mettu2004}, the runtime is optimal.

\begin{corollary}
Given a set $P$ of $n$ points, there exists an algorithm that outputs a $O(1)$-approximation to the $k$-median clustering of $P$ with probability $1-\delta$ in time $O(n k \log \frac{1}{\delta})$.
\end{corollary}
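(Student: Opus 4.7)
The plan is to derandomize the stream order in the RAM model and then invoke Theorem~\ref{thm:kmed}. First I would shuffle $P$ uniformly at random via Fisher--Yates in $O(n)$ time. A uniformly random permutation is exactly a $t$-semirandom stream for $t=1$, so the hypothesis $m \ge k(4 + \lceil \log_2 t \rceil)$ of Theorem~\ref{thm:kmed} collapses to $m = \Theta(k)$.

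Next I would instantiate Algorithm~\ref{alg:kmed} with this $m$ and feed it the shuffled sequence. Theorem~\ref{thm:kmed} guarantees $O(m) = O(k)$ worst-case update time and $O(k)$ storage, so one run processes all $n$ points in $O(nk)$ total time and outputs a weighted set $\Psi$ of $O(k)$ points satisfying $\cost(P,\Psi) = O(1) \cdot \opt_k(P)$ with probability at least $\tfrac{1}{2}$.

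To reach confidence $1 - \delta$ I would apply the amplification scheme noted just before Theorem~\ref{thm:kmed}: run $\lceil \log_2 \tfrac{1}{\delta} \rceil$ independent copies of Algorithm~\ref{alg:kmed} in parallel, each on its own independent shuffle, and return the $\Psi$ from the copy whose final value of $L$ is smallest. Since the runs are independent and each succeeds with probability $\tfrac{1}{2}$, at least one succeeds except with probability $\delta$; the total running time is $O(nk \log \tfrac{1}{\delta})$. To convert the returned weighted summary into $k$ actual centers, I would then run any offline $O(1)$-approximation (e.g.\ \cite{Mettu2003}) on the $O(k)$ points of $\Psi$, which costs $\mathrm{poly}(k)$ time and is absorbed in the $O(nk)$ bound (the case $n < k$ is trivial). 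The standard fact that a constant-factor approximation of a constant-factor summary is a constant-factor approximation of the original input, already cited in the paragraph following Theorem~\ref{thm:kmed}, closes the argument.

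The one point that needs care is justifying the ``smallest-$L$'' selection rule. The key observation is that on every successful run, Lemmas~\ref{lemma:L} and~\ref{lemma:opt'} jointly pin $L$ within a constant factor of $\opt_k(P)$ and pin $\cost(P,\Psi)$ within a constant factor of $L$, so the smallest-$L$ instance returns a $\Psi$ whose cost is at most a constant times that of the best successful run. Conditioned on at least one successful run, this yields the promised $O(1)$-approximation, giving the overall success probability $1 - \delta$ and establishing the corollary.
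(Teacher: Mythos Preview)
Your proof is correct and follows essentially the same route as the paper: shuffle into random order, run Algorithm~\ref{alg:kmed} with $m=\Theta(k)$, repeat $\lceil \log_2 \tfrac{1}{\delta}\rceil$ times, and post-process with an offline $O(1)$-approximation on the $O(k)$-point summary. The only cosmetic difference is that the paper selects among the repetitions by the actual cost $\cost(P,C)$ of each candidate solution (computable in $O(nk)$ time per repetition), whereas you select by the smallest final value of $L$ as in the streaming amplification remark; both rules are valid, and your justification of the min-$L$ rule via Lemmas~\ref{lemma:L} and~\ref{lemma:opt'} is sound (the crucial point being that $\cost(P,\Psi)<20L$ holds unconditionally, so the minimum-$L$ run inherits the upper bound from any successful run).
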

\begin{proof}
Shuffle $P$ in $O(n)$ time.
Set $m = 4k$ and run Algorithm~\ref{alg:kmed} in $O(nk)$ time followed by the offline algorithm of~\cite{Mettu2004} in $O(k^2)$ time to obtain by Theorem~\ref{thm:kmed} a set $C$ of $k$ points such that $\cost(P,C) \le O(1) \cdot \opt_k(P)$ with probability at least $\frac{1}{2}$.
Repeat this $\lceil \log_2 \frac{1}{\delta} \rceil$ times and output the solution of minimal cost.
\end{proof}

\subsection{Extension to Other Functions} \label{section:M}

We have assumed that we are in a metric space $(\mathcal{X},d)$ but we can weaken this assumption.
Suppose that throughout our results we replace the metric with an arbitrary symmetric positive-definite function $D : \mathcal{X} \times \mathcal{X} \rightarrow [0, \infty)$.
If $D$ satisfies the triangle inequality, then $D$ is a metric and our result for $k$-median applies directly.
However, suppose that $D$ just satisfies a weak triangle inequality for some $\beta \ge 1$:
$$D(a,c) \le \beta (D(a,b) + D(b,c)) \text{ for all } a,b,c \in \mathcal{X}$$
All of our proofs go through with larger constants.
The bound in Theorem~\ref{thm:kmed} generalizes to $(h+1+\beta)fk + \beta(r+3\beta)\opt_k(S)$ and the guarantee of the $\mathcal{B}(X,k)$ routine of the next subsection generalizes to $\cost(X,Z) < 2 \beta \opt_k(X)$.
For any function $D$ satisfying a constant $\beta$ value, 
%this implies that $\texttt{OFL}$ is $\Theta(\frac{\log t}{\log \log t})$-competitive in expectation and that the $O(k \log t)$ space clustering algorithm maintains a $O(1)$-approximation.
our results for both online facility location and clustering carry through with larger constants.
An example application is that our results generalize to $\ell_p$ norms.
Another important case is $k$-means that corresponds to $D(x,y) = d(x,y)^2$.

Recall that the maximum likelihood estimator for the mean $\mu$ of Gaussian data $Q$ is the $\hat{\mu}$ that minimizes $\sum_{q \in Q} d(q,\hat{\mu})^2$.
To handle outliers more robustly, the statistics community introduced $M$-estimators which generalize maximum likelihood estimation by minimizing $\sum_{q \in Q} \rho(d(q,\hat{\mu}))$ for some positive-definite function $\rho : [0, \infty) \rightarrow [0, \infty)$.
An $M$-estimator along with a positive integer $k$ defines a clustering problem to find a set $C$ of $k$ points that minimizes $\sum_{q \in Q} \min_{c \in C} \rho(d(q,c))$.
Observe that we recover $k$-means for $\rho(x)= x^2$ and $k$-median for $\rho(x) = x$.
The convergence and robustness properties of $M$-estimators have been well-studied, but we also observe that the function $D = \rho \circ d$ usually satisfies a weak triangle inequality for a very low $\beta$ value.
Evidently we can let $\beta$ be any value such that $\rho(c) \le \beta(\rho(a) + \rho(b))$ for all $a,b,c \ge 0$ such that $c \le a + b$.
%However, it is non-trivial to calculate $\beta$ for certain $\rho$ functions.
In Table~\ref{table:rho} we have calculated tight $\beta$ values for the most commonly used $M$-estimators .

\begin{table}[ht]
\centering
{\tabulinesep=1.2mm
\begin{tabu}{|c|c|c|c|}
  \hline
  Estimator & $\rho$ function & $\beta$ \\
  \hline
  Linear & $\rho(x) = x$ & 1 \\
  \hline
  Gaussian & $\rho(x) = x^2$ &  2  \\
  \hline
  Huber & $\rho(x)=
	\begin{cases}
	x^2 &\text{if } x \text{ \textless } \text{ } 1\\
	2x - 1 & \text{if x}  \ge 1
	\end{cases}$ & 2 \\
  \hline
  Cauchy   & $\rho(x)= \log (1 + x^2)$ & 2 \\
  \hline
  Tukey   & $\rho(x)=
	\begin{cases}
		1-(1-x^2)^3 &\text{if } x \text{ \textless } \text{ } 1\\
		1 & \text{if x}  \ge 1
	\end{cases}$ & 2 \\
  \hline
\end{tabu}}
\caption{List of the most common $M$-estimators.
% $\rho(x)$ induces a clustering problem with cost $\psi(d(p,c))$ for a point $p$ with nearest center $c$.  
The last column is redundant since the $\beta$ value can be directly calculated from the $\rho$ function.
All of these estimators are actually a parameterized family by scaling $\rho(x)$ to $c_1 \rho(c_2 x)$ for $c_1,c_2 > 0$, but we choose only a single representative since the $\beta$ value is unchanged.}
\label{table:rho}
\end{table}

%\begin{theorem}
%Every function in the family of Tukey estimators satisfies $\rho = 2$ and does not satisfy any $\rho < 2$.
%\end{theorem}
%\begin{proof}
%Since $\psi$ is non-decreasing, the domain in the supremum can be restricted to $c = a + b$.
%We have $\psi(x) = 1$ for all $x \ge 1$, so we may further restrict the %domain to $a + b \le 1$.

%\end{proof}

These results imply that $O(k \log t)$ space suffices to approximate $M$-estimators on a data stream.
We remind the reader that we can use the technique of~\cite{BFL16} to maintain an $\epsilon$-coreset which permits a $(1+\epsilon)$-approximation to the optimal $M$-estimator.

%It is not a surprise to find that $\rho = 2$ for most $M$-estimators.
%By general principle, statisticians design $\psi$ functions that are well-approximate a Gaussian in a neighborhood of $x = 0$, implying that $\rho \ge 2$.
%Conversely, for robust estimation the $\psi$ function should increase rapidly as $x$ becomes large, suggesting that $\rho$ 

\subsection{The $\mathcal{B}(X,k)$ Routine} \label{section:compress}

We present a deterministic algorithm $\mathcal{B}$ that accepts a weighted set $A$ of $n$ distinct points along with a positive integer $k$ and returns a weighted set $Z$ of $\lfloor \frac{n + k}{2} \rfloor$ distinct points such that $\cost(A,Z) < 2 \opt_k(A)$.
If the nearest neighbor graph on $A$ has been computed, then $\mathcal{B}$ terminates in $O(n)$ time.

In what follows, there must be a way to order the points of $A$.  This is necessary for a technical detail that comes up in Lemma~\ref{lemma:cycles}; we need a consistent way to break ties.  In practice, this can simply be the order that the algorithm loops through the points of $A$.

\begin{definition} \label{definition:pi}
	The function $\pi : A \rightarrow A$ is defined such that $\pi(a) = \arg \min_{x \in A \setminus \{a\}} d(a,x)$.
	If there is more than one point $x$ that minimizes $d(x,a)$, break ties by selecting the $x$ that is greatest according to the order of $A$.
\end{definition}
%Observe that for a weighted set of $n$ points, $\pi$ can be computed in $O(n^2)$ time.

We use $\pi$ to define a graph as follows:

\begin{definition}
	Let $A$ be a weighted set.
	The directed graph $G(A,\pi)$ has the points of $A$ as vertices.
	For each vertex $a$, there is exactly one directed edge leaving $a$ and pointing to $\pi(a)$.
\end{definition}

$G(A,\pi)$ possesses a special structure of not containing any cycles of length greater than $2$.

\begin{lemma} \label{lemma:cycles}
	The graph $G(A,\pi)$ contains cycles only of length $2$.
\end{lemma}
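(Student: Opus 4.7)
The plan is to argue by contradiction: assume a directed cycle $a_1 \to a_2 \to \cdots \to a_m \to a_1$ of length $m \ge 3$ exists in $G(A,\pi)$, and derive a contradiction from the definition of $\pi$ (including its tie-breaking rule).

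First I would show that all edges of such a cycle must have the \emph{same} length. Let $M = \max_i d(a_i, a_{i+1})$ (indices taken modulo $m$) and pick any index $i$ achieving the maximum. Since $\pi(a_i) = a_{i+1}$, the definition of $\pi$ gives $d(a_i, a_{i+1}) \le d(a_i, a_{i-1})$, because $a_{i-1}$ is a point of $A \setminus \{a_i\}$. Combined with maximality of $M$, this forces $d(a_{i-1}, a_i) = M$ as well. Iterating this observation once around the cycle shows $d(a_j, a_{j+1}) = M$ for every $j$.

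Next I would extract a contradiction from the tie-breaking rule. For each index $i$, the points $a_{i-1}$ and $a_{i+1}$ both lie at distance exactly $M$ from $a_i$. Since $\pi(a_i) = a_{i+1}$, the tie-breaking clause in Definition~\ref{definition:pi} (selecting the greatest candidate under the order on $A$) implies that $a_{i+1}$ is strictly greater than $a_{i-1}$ in the chosen order of $A$. (Equality is ruled out because the points of $A$ are distinct and the cycle is simple, so $a_{i+1} \ne a_{i-1}$ when $m \ge 3$.) Writing this as $a_{i+1} > a_{i-1}$ for every $i \in \{1,\ldots,m\}$ and chaining these inequalities around the cycle yields a strict cyclic chain $a_2 > a_m > a_{m-2} > \cdots$, which quickly closes up on itself and gives $a_j > a_j$ for some index $j$, a contradiction.

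Together these two steps rule out cycles of length $\ge 3$. Self-loops are excluded because $\pi(a)$ is chosen from $A \setminus \{a\}$, so the only cycles that can remain have length exactly $2$, as claimed. The main subtlety, and the step I expect to need the most care, is the tie-breaking argument: one has to verify that $a_{i-1} \ne a_{i+1}$ so that the strict-inequality form of the tie-breaking rule actually applies, and that the rule uniformly orients the comparison in the same direction around the whole cycle — which is exactly what produces the impossible cyclic chain.
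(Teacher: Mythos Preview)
Your proposal is correct and follows essentially the same approach as the paper: first show that all edges of a directed cycle have equal length via the nearest-neighbor inequality $d(a_i,a_{i+1}) \le d(a_i,a_{i-1})$, then invoke the tie-breaking rule to force the cycle to collapse. The only cosmetic difference is in the final step: the paper picks the single vertex $a_t$ that is greatest in the order and argues directly that $\pi(a_{t+1})$ must equal $a_t$, whence $a_{t+2}=a_t$ and $s=2$, whereas you derive $a_{i+1}>a_{i-1}$ for every $i$ and close up a strict cyclic chain; both reach the same contradiction.
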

\begin{proof}
	Let $\{a_1, \ldots, a_s\}$ be a cycle such that for each $1 \le i \le s$ we have $\pi(a_i) = a_{i+1}$ (additions should be interpreted modulo $s$).
	We will show that $s=2$.
	
	By definition of $\pi$, it must be that $d(a_i, a_{i+1}) \le d(a_{i-1}, a_i)$ since $\pi(a_i) = a_{i+1}$.
	Then we have $d(a_1,a_2) \le \ldots \le d(a_s,a_1) \le d(a_1,a_2)$ and the chain of inequalities implies equality.
	Let $a_t$ be the element of the cycle that is greatest according to the ordering of $A$.
	Since $a_t$ and $a_{t+2}$ are equidistant from $a_{t+1}$, the criterion for breaking ties in Definition~\ref{definition:pi} ensures that $\pi(a_{t+1}) = a_t$.
	Since $\pi(a_{t+1}) = a_{t+2}$, it must be that $t+2 = t$ and so $s=2$.
\end{proof}
In light of Lemma~\ref{lemma:cycles}, let us consider the structure of the directed graph $(A,\pi)$.
Removing the edges in length-2 cycles, we are left with a forest (a collection of trees directed to the root).
Considering the full graph along with these 2-cycles, we see that each component is a pair of trees whose roots are coupled.
% the root of each tree is coupled with the root of another tree.
This forest of ``bi-trees'' can be 2-colored, and the following lemma shows that we can do this efficiently.

\begin{lemma} \label{lemma:2color}
	Given the function $\pi$, the graph $G(A,\pi)$ can be 2-colored in $O(m)$ time.
\end{lemma}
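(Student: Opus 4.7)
The plan is to show that $G(A,\pi)$ is a disjoint union of bipartite components with a very restricted structure, and then give an explicit linear-time traversal that produces a proper 2-coloring.

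First, I would describe the global structure of $G(A,\pi)$. Each vertex has out-degree exactly one, so starting from any vertex and following $\pi$ we eventually enter a cycle. By Lemma~\ref{lemma:cycles} every cycle has length exactly $2$, so each weakly connected component of $G(A,\pi)$ contains a unique pair of vertices $\{a, \pi(a)\}$ with $\pi(\pi(a)) = a$; removing those two edges leaves two trees, one rooted at $a$ and one rooted at $\pi(a)$, whose edges all point toward the root. Call such a component a bi-tree. A bi-tree is manifestly bipartite: color the two roots with opposite colors, then propagate backward, assigning each predecessor the color opposite to its successor. Consistency follows because every non-root vertex has a unique directed path to the 2-cycle, so its color is determined without conflict, and the only non-tree edges are the pair of edges inside the 2-cycle, whose endpoints already receive opposite colors.

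Next, I would give the linear-time implementation. Step one: build the reverse adjacency lists $\pi^{-1}(a)$ for every $a \in A$ in $O(|A|)$ total time, by scanning $A$ once and appending $a$ to the bucket of $\pi(a)$. Step two: identify the 2-cycles by scanning $A$ once and checking $\pi(\pi(a)) = a$; mark each such pair $\{a,\pi(a)\}$ as the base of a bi-tree, being careful to record each pair only once (e.g., mark only when $a$ precedes $\pi(a)$ in the ordering). Step three: for each marked base pair $(a,\pi(a))$, assign $a$ color $0$ and $\pi(a)$ color $1$ and perform a BFS (or DFS) over $\pi^{-1}$ starting from each of these two roots, where upon visiting a vertex $v$ we assign each $u \in \pi^{-1}(v) \setminus \{\pi(v)\}$ the color opposite to that of $v$ and push $u$ onto the queue.

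Since each vertex belongs to exactly one bi-tree and is popped from the queue at most once, and each reverse edge is traversed at most once, the total work is $O(|A|)$, matching the build cost of the reverse adjacency lists. The only subtlety is that when expanding $v$ we must not revisit the other endpoint of the 2-cycle through the reverse edge; this is handled by excluding $\pi(v)$ from the expansion at the base step (equivalently, marking the two base vertices as visited before traversal begins). This is the only place where one has to be careful, and it is handled by a single comparison per base vertex, so there is no real obstacle—the whole proof is essentially a verification that the standard backward BFS on a forest-of-bi-trees runs in linear time and produces a proper $2$-coloring.
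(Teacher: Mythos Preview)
Your proposal is correct and follows essentially the same approach as the paper: both arguments use Lemma~\ref{lemma:cycles} to recognize each component as a bi-tree, build child/reverse-adjacency lists in linear time, locate the unique 2-cycle at the base of each component, and then propagate opposite colors outward via a traversal that touches each vertex once. The only cosmetic difference is that you detect the 2-cycles directly via the test $\pi(\pi(a)) = a$, whereas the paper walks forward along $\pi$ from an arbitrary uncolored vertex until it revisits a node; either way the work is linear and the coloring step is identical.
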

\begin{proof}
	We say that $\pi(a)$ is the parent of $a$, and that $a$ is the child of $\pi(a)$.
	Each vertex has exactly one parent, and the edges point to the parent.
	For each point $a \in A$, we store a pointer to its parent as well as a list of pointers to its children.
	Given the function $\pi$, this can be accomplished in $O(n)$ time.
	
	As reasoned above, the graph can be partitioned into bi-tree components.  We use the following iterative procedure until all vertices have been colored:
	(1) Select any uncolored vertex;
	(2) Walk along the edges until reaching the two roots;
	(3) Color each root a different color;
	(4) Recursively color each child vertex the opposite color than its parent.
	
	For Step 2, we will know we have located the roots when we return to the vertex we just left.
	This process (of moving from $a$ to $\pi(a)$ until reaching the root) terminates in time proportional to the depth of the tree.
	Therefore a total of $O(n)$ time is spent during Step 2 over all iterations of this procedure.
	
	For Step 4, finding a child takes $O(1)$ time since we have stored a list of children with each vertex.
	Therefore a total of $O(n)$ time is spent during Step 4 over all iterations of this procedure.
\end{proof}

We will need the following technical lemma to bound $\cost(A,Z)$.
Recall that $\opt_k(A)$ is defined using centers from anywhere in the metric space $\mathcal{X}$.
The lemma says that if we restrict to centers from $A$ itself, 
then the optimal cost increases by less than a factor of two. 

%It says that the optimal cost of clustering $A$ using centers from $A$ is at most twice the optimal cost using centers from anywhere in the metric space.

\begin{lemma} \label{lemma:2opt}
Let $\overline{\opt}_k(A)$ denote the minimum of $\cost(A,B)$ where $B \subset A$ ranges over all sets of $k$ points.  Then $\overline{\opt}_k(A) < 2 \opt_k(A)$.
\end{lemma}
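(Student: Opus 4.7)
The plan is to exhibit an explicit witness $B \subseteq A$ with $|B| \le k$ satisfying $\cost(A, B) < 2\opt_k(A)$, which immediately yields $\overline{\opt}_k(A) \le \cost(A, B) < 2\opt_k(A)$. Let $B^\star = \{b_1, \ldots, b_k\} \subset \mathcal{X}$ be an optimal $k$-median solution for $A$ and let $C_1, \ldots, C_k$ be the induced partition of $A$, so that $\opt_k(A) = \sum_{i=1}^{k} \sum_{a \in C_i} w(a)\, d(a, b_i)$, where $w$ denotes the weights on $A$. I would pick one representative $a_i^\star \in C_i$ per cluster and set $B = \{a_1^\star, \ldots, a_k^\star\} \subseteq A$; since each $a_i^\star \in C_i \subseteq A$, this is a feasible witness, and $\cost(A, B) \le \sum_i \sum_{a \in C_i} w(a)\, d(a, a_i^\star)$, so it suffices to bound each cluster separately.

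For a fixed cluster, the triangle inequality gives $d(a, a_i^\star) \le d(a, b_i) + d(b_i, a_i^\star)$ for every $a \in C_i$. To pick a good representative I would use the standard weighted-averaging trick: sampling $a_i^\star \in C_i$ with probability $w(a_i^\star)/W_i$, where $W_i = \sum_{a \in C_i} w(a)$, makes the expected value of $W_i \cdot d(b_i, a_i^\star)$ equal to $\sum_{a \in C_i} w(a)\, d(a, b_i)$. Consequently the expected per-cluster cost is at most $2 \sum_{a \in C_i} w(a)\, d(a, b_i)$, and by averaging some deterministic choice of representative achieves this bound.

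The main obstacle is upgrading this to a strict inequality. The slack comes from the diagonal term $a = a_i^\star$ in the sum, which contributes $0$ to the left-hand side while its triangle-inequality upper bound contributes $w(a_i^\star) \cdot 2 d(a_i^\star, b_i)$ on the right. Redoing the weighted average while excluding the diagonal term sharpens the per-cluster bound to $\bigl(2 - 2 w(a_i^\star)/W_i\bigr) \sum_{a \in C_i} w(a)\, d(a, b_i)$, which is strictly less than $2 \sum_{a \in C_i} w(a)\, d(a, b_i)$ whenever that cluster has positive weighted cost. Summing over clusters then yields $\cost(A, B) < 2\opt_k(A)$, provided at least one cluster has positive cost, i.e., $\opt_k(A) > 0$; the degenerate case $\opt_k(A) = 0$ forces each $b_i \in A$ (giving $\overline{\opt}_k(A) = 0$ as well) and is implicitly excluded by the strict inequality in the statement.
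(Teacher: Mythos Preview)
Your argument is correct and follows the same template as the paper: take an optimal center set, replace each center by a representative from $A$, and bound the cost via the triangle inequality. The paper's choice of representative is more direct than your averaging argument: for each optimal center $c$ it picks $c'$ to be the \emph{nearest} point of $A$ to $c$, so that $d(c,c') \le d(c,a)$ for every $a$ and hence $d(a,c') \le d(a,c) + d(c,c') \le 2d(a,c)$ holds pointwise, with strict improvement at $a=c'$ whenever $c \notin A$. This avoids the probabilistic detour entirely. One small slip in your write-up: the sharpened expected bound after excluding the diagonal is not $\bigl(2 - 2w(a_i^\star)/W_i\bigr)\sum_a w(a)\,d(a,b_i)$ (which still contains the random variable $a_i^\star$), but rather $2\sum_a w(a)\,d(a,b_i) - \tfrac{2}{W_i}\sum_a w(a)^2\,d(a,b_i)$; this is indeed strictly below twice the cluster cost whenever that cost is positive, so your conclusion stands. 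Your explicit handling of the degenerate case $\opt_k(A)=0$ is a nice touch that the paper glosses over.
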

\begin{proof}
Let $C \subset \mathcal{X}$ be a set of $k$ points such that $\cost(A,C) = \opt_k(A)$.
For each $c \in C$, let $c'$ be the closest point of $A$ to $c$.
Any element $a \in A$ that was connected to $c$ can be instead connected to $c'$ with a cost of $d(a,c') \le d(a,c) + d(c,c') \le 2 d(a,c)$ since $d(c,c') \le d(c,a)$.
Moreover, the cost of this cluster increased by strictly less than factor of two since if $c \notin A$ the cost decreased for $a = c'$ and if $c \in A$ the cost stayed the same.
Define $C' = \{c'\}_{c \in C}$.
Then $C' \subset A$ is a set of $k$ points such that $\overline{\opt}_k(A) \le \cost(A,C') < 2 \cost(A,C) = 2\opt_k(A)$.
\end{proof}

The $\mathcal{B}(X,k)$ algorithm is presented in the next theorem.
The basic idea is to $2$-color $G(A,\pi)$ and then eliminate one of the colors by relocating those points to their image under $\pi$.
Since $\pi$ maps each point to a point of the opposite color, this transformation increases the weights of one color while completely eliminating the other.

\begin{SCfigure}
	\centering
	\includegraphics[scale=0.5]{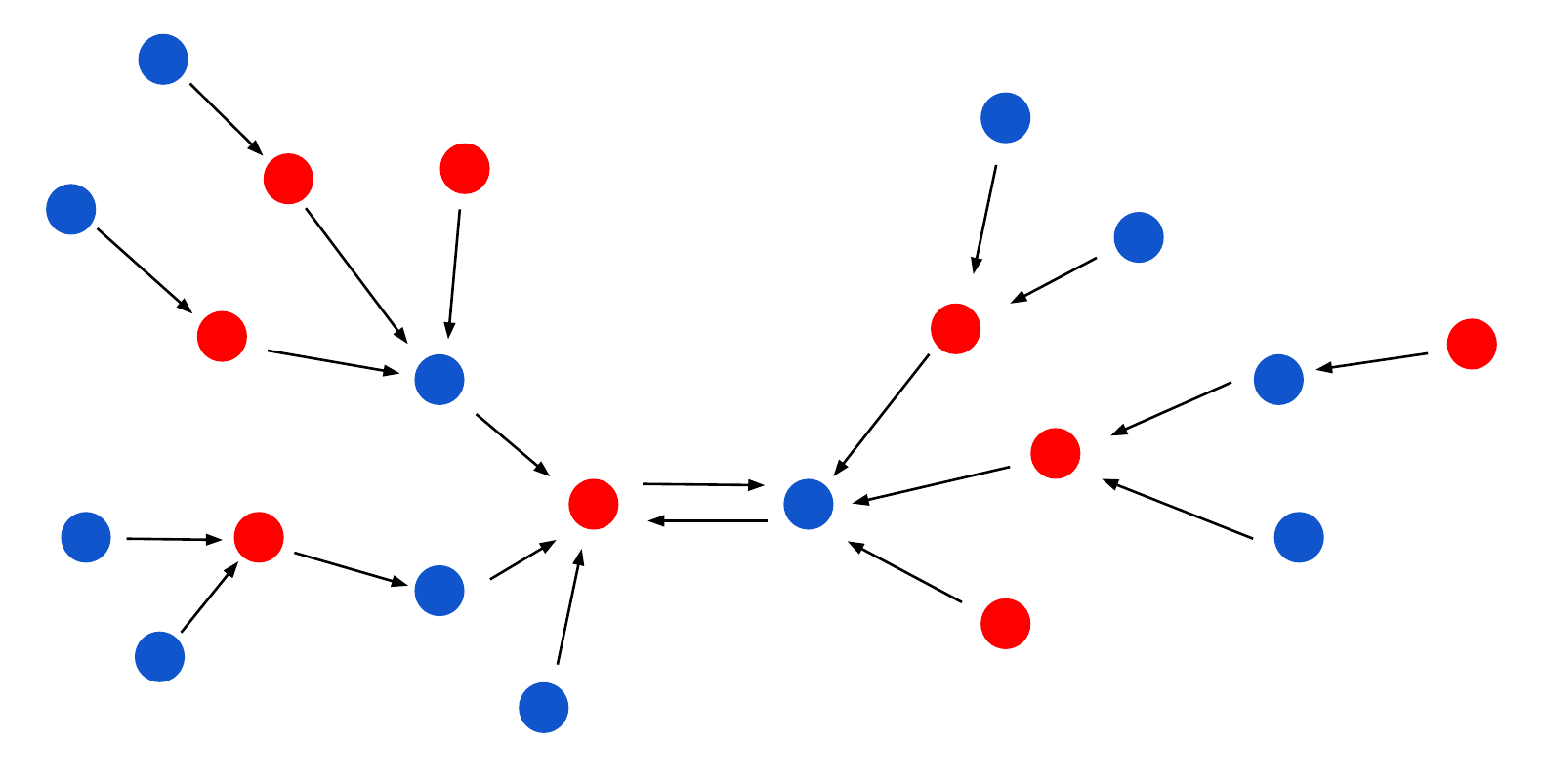}
\caption{A $2$-colored bi-tree of the nearest neighbor graph $G(A,\pi)$.
Algorithm $\mathcal{B}(A,k)$ removes all blue points by moving the weight of each blue point to the red point to which it points.}
\end{SCfigure}

\begin{theorem} \label{theorem:compress}
	Let $A$ be a weighted set of $n$ distinct points.  Assume that $\pi$ has been computed for $A$.  In $O(n)$ time, we can compute a weighted set $Z$ of at most $\lfloor \frac{n+k}{2} \rfloor$ distinct points such that $\cost(A,Z) < 2 \opt_k(A)$.
\end{theorem}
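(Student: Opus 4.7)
The plan is to execute the algorithm sketched in Section~\ref{section:compress}: first apply Lemma~\ref{lemma:2color} to 2-color $G(A,\pi)$ in $O(n)$ time. The output $Z$ is then obtained by selecting a subset $E \subset A$ to eliminate --- where $E$ must be ``independent'' in the sense that $\pi(a) \notin E$ for every $a \in E$, so that relocating weights is well-defined --- and setting $Z = A \setminus E$ with each eliminated $a$'s weight added onto $\pi(a) \in Z$. Because $\pi$ maps every point to one of the opposite color, each color class is automatically independent, and more generally any subset of a single color class within each bi-tree is independent. I would choose $E$ with $|E| \ge \lceil(n-k)/2\rceil$ minimizing the elimination cost $\sum_{a \in E} w(a) d(a, \pi(a))$, which gives $|Z| \le \lfloor(n+k)/2\rfloor$ directly.

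For the cost bound I would invoke Lemma~\ref{lemma:2opt} to fix an optimal in-$A$ clustering with centers $C \subset A$, $|C| \le k$, and assignment $L : A \to C$ achieving $\overline{\opt}_k(A) < 2\opt_k(A)$. For any non-center $a \notin C$, the assignment $L(a)$ lies in $A \setminus \{a\}$, so $d(a,\pi(a)) \le d(a, L(a))$ since $\pi(a)$ is the nearest neighbor. The key claim is the existence of an independent set $E^{*}$ of size $\lceil(n-k)/2\rceil$ that is disjoint from $C$. This follows by per-bi-tree accounting: within each bi-tree $B$ of size $s_B$ containing $c_B = |C \cap B|$ centers, the larger of the two color classes restricted to non-centers has size at least $\lceil (s_B - c_B)/2 \rceil$, and summing over all bi-trees yields $\sum_B \lceil (s_B - c_B)/2 \rceil \ge \lceil (n - k)/2 \rceil$ since $\sum_B c_B \le k$. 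Because the algorithm outputs a minimum-cost valid $E$, we then obtain $\cost(A,Z) \le \sum_{a \in E^{*}} w(a) d(a, \pi(a)) \le \sum_{a \notin C} w(a) d(a, L(a)) \le \overline{\opt}_k(A) < 2\opt_k(A)$.

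The main obstacle is realizing the minimization in $O(n)$ time without knowing $C$ (which is purely an analytical construct). The required structure --- a minimum-weight independent set of prescribed size in a forest of bi-trees --- is amenable to a tree-DP that exploits the decomposition guaranteed by Lemma~\ref{lemma:cycles}: each bi-tree has a unique 2-cycle, around which the DP states track whether the current vertex lies in $E$ and propagate from the leaves toward the 2-cycle, merging the two subtree halves at the boundary with constant overhead. Because the cost of each eliminated vertex is fixed, the relevant DP tables at each vertex admit a linear-sized representation, and each bi-tree can be processed in time linear in its size, summing to $O(n)$. The final reweighting pass that assembles $Z$ is likewise $O(n)$.
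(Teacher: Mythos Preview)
Your existence argument for the independent set $E^*$ disjoint from $C$ and your chain of inequalities bounding the elimination cost by $\overline{\opt}_k(A)$ are both correct. The difficulty is entirely in the running time. Finding a minimum-weight independent set of prescribed size in a forest is a size-constrained tree DP: at each vertex the table is indexed by the number of chosen vertices in the subtree, and merging the tables of the children is a $(\min,+)$-convolution. The total work over the tree is governed by the standard tree-knapsack analysis and is $\Theta(n^2)$ in the worst case (e.g.\ on a path, since here $m=\lceil (n-k)/2\rceil$ can be $\Theta(n)$). Your assertion that ``each bi-tree can be processed in time linear in its size'' is exactly the step that does not go through; a linear-size table at each vertex does not imply linear total time once you account for the convolutions.

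The paper sidesteps this optimization entirely with a much simpler $O(n)$ construction. First use linear-time selection to identify the set $T$ of the $k$ points with largest $w(a)\,d(a,\pi(a))$; then among the remaining $n-k$ points, take the globally larger color class as $E$. Since $E$ lies in one color, $\pi(a)\notin E$ for all $a\in E$; since $|E|\ge\lceil(n-k)/2\rceil$, the size bound on $Z=A\setminus E$ holds; and since $A\setminus T$ consists of the $n-k$ smallest values of $w(a)\,d(a,\pi(a))$, we get
\[
\cost(A,Z)=\sum_{a\in E} w(a)\,d(a,\pi(a))\;\le\;\sum_{a\in A\setminus T} w(a)\,d(a,\pi(a))\;\le\;\sum_{a\notin C} w(a)\,d(a,\pi(a))\;\le\;\overline{\opt}_k(A),
\]
which is the same bound you were aiming for, obtained without any optimization over independent sets. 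Your framework is sound; replacing the DP with this direct construction closes the gap.
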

\begin{proof}
Let the function $w : A \rightarrow \mathbb{N}$ map each point of $A$ to its weight.
By Lemma~\ref{lemma:2color}, we $2$-color $A$ in $O(n)$ time.
Let $A_1$ and $A_2$ be the partition of $A$ into the two colors after removing the $k$ points with the top values of $w(a) d(a, \pi(a))$.
Let $|A_t|$ be the larger component (by number of points) and note that $|A_t| \ge \lceil \frac{n-k}{2} \rceil$.
	
Build $Z$ from $A$ as follows:
for each $a \in A_t$, increment $w(\pi(a))$ by $w(a)$ and delete $a$.
This procedure terminates in $O(n)$ time.
By definition of a 2-coloring, $\pi(a) \notin A_t$ for every element $a \in A_t$, and so $Z$ contains $n - |A_t| \le \lfloor \frac{n+k}{2} \rfloor$ weighted points.

Observe that $\cost(A,Z) = \sum_{a \in A_t} w(a) d(a, \pi(a))$.
The optimal $k$-median solution of $A$ using points from $A$ involves moving $n-k$ points of $A$ by at least the distance to the nearest neighbor, so $\sum_{a \in A_1 \cup A_2} w(a) d(a, \pi(a)) \le \overline{\opt}_k(A)$ and so $\cost(A,Z) \le \overline{\opt}_k(A)$.
This completes the proof since Lemma~\ref{lemma:2opt} guarantees $\overline{\opt}_k(A) < 2 \opt_k(A)$.
\end{proof}

To complete the guarantee of Theorem~\ref{thm:compress}, observe that we can return the exact value of $\cost(A,Z)$ in $O(n)$ time. %and compute $\pi$ for $Z$.  To update the function $\pi$ for $Z$, for each $z \in Z$ we simply set:
%\[
%\pi(z) \gets
%\begin{cases}
%	\pi(z) &\text{if } \pi(z) \in Z \\
%	\pi(\pi(z)) & \pi(z) \notin Z
%\end{cases}
%\]
%and observe that this operation can be done in an arbitrary order over $Z$ and works since $\pi(\pi(z)) \in Z$ whenever $\pi(z) \notin Z$.

\section{Discussion of Models for Semirandom Order} \label{section:open}

IITK Open Problem \#15 addresses computational models of semirandom-order for streams and asks how the models of $t$-bounded adversarial order and $\epsilon$-generated random order relate to each other~\cite{sublinear15}.
One can verify that $\epsilon = 0$ like $t = 0$ is equivalent to random-order and that $\epsilon = 1$ like $t = n$ is equivalent to adversarial-order.
However, we demonstrate in the following two lemmas that no other relations hold between these models.

\begin{lemma}
To simulate the $t$-bounded adversarial order model for any $t > 1$, the $\epsilon$-generated order model requires $\epsilon \ge 1-2^{-n/2}$.
\end{lemma}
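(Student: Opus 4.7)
The plan is to exhibit one specific $2$-bounded adversary whose output distribution is supported on an exponentially small set of permutations; then total-variation closeness to uniform forces $\epsilon$ to be nearly $1$. Since $t>1$ means $t\ge 2$, and any $t$-bounded adversary for $t\ge 2$ can execute a $2$-bounded strategy, it suffices to produce such a strategy for $t=2$.

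The strategy I would use is ``sort consecutive pairs by original index.'' Assume for simplicity that $n$ is even (the odd case merely loses a single factor of $\sqrt{2}$). The adversary reads the random-order input $p_{\tau(1)},\ldots,p_{\tau(n)}$ and, for each $i\in[n/2]$, pulls both elements at input-positions $2i-1,2i$ into its hand of size two, then emits them in increasing order of their \emph{original} index. In the notation of Definition~\ref{def:semirandom}, this corresponds to choosing the adversarial permutation $\sigma$ that swaps positions $2i-1,2i$ exactly when $\tau(2i-1)>\tau(2i)$, and is the identity on those positions otherwise. The net permutation from the original stream to the output is $\pi=\tau\circ\sigma$, and by construction $\pi(2i-1)<\pi(2i)$ for every $i$.

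Next I would verify that $\sigma$ satisfies the $2$-bounded constraint: $\sigma$ is a product of disjoint transpositions of adjacent pairs, so for every index $i$ the count $|\{j<i:\sigma(j)>\sigma(i)\}|$ is either $0$ or $1$, hence strictly less than $t=2$. So this is a legal strategy for every $t\ge 2$.

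Now I would bound the total variation. Let $\nu$ be the induced distribution on the net permutation $\pi$ and let $\mu$ be uniform on $S_n$. Define $E=\{\pi\in S_n:\pi(2i-1)<\pi(2i)\text{ for all }i\in[n/2]\}$. By construction $\nu(E)=1$, while under $\mu$ the events $\{\pi(2i-1)<\pi(2i)\}$ are independent across the $n/2$ disjoint pairs and each has probability $1/2$, giving $\mu(E)=2^{-n/2}$. Therefore
\[
\tfrac{1}{2}\|\mu-\nu\|_1 = d_{TV}(\mu,\nu)\ \ge\ \nu(E)-\mu(E)\ =\ 1-2^{-n/2}.
\]
For the $\epsilon$-generated model to reproduce $\nu$ we need $\|\mu-\nu\|_1\le 2\epsilon$, i.e.\ $\epsilon\ge 1-2^{-n/2}$, which is the claimed bound. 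The only step that requires any real thought is choosing the ``sort each pair'' strategy—once chosen, verifying the $2$-bounded constraint and computing $\mu(E)$ are routine, and there is no serious obstacle.
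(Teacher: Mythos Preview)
Your proof is correct and is essentially the same argument as the paper's: both use the $2$-bounded ``sort each consecutive pair'' adversary, observe that the output is supported on the event $E=\{\pi(2i-1)<\pi(2i)\text{ for all }i\}$, compute $\mu(E)=2^{-n/2}$ by independence of the pairwise comparisons under the uniform distribution, and conclude via the total-variation bound. Your write-up is in fact more careful than the paper's, since you explicitly verify the $2$-bounded constraint on $\sigma$ and spell out the composition $\pi=\tau\circ\sigma$.
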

\begin{proof}
Let the stream consist of elements $\{1, \ldots, n\}$.
Let $\chi(i)$ denote the identity of the $i^\text{th}$ element in the initial random-order stream.
For any $t \ge 2$, a $t$-bounded adversary can ensure that $\chi(2i-1) < \chi(2i)$ for all $1 \le i \le \frac{n}{2}$.
Let $A_i$ be the event that $\chi(2i-1) < \chi(2i)$.
In random-order, observe that $P(A_i) = \frac{1}{2}$ and that the $\{A_i\}_{i=1}^{n/2}$ are mutually independent.
Therefore the uniform distribution $\mu$ assigns probability mass $1 - (\frac{1}{2})^{n/2}$ to the orders satisfying $\chi(2i-1) > \chi(2i)$ for some $1 \le i \le \frac{n}{2}$.
Any distribution $\nu$ that assigns probability mass $1$ to $\cap_{i=1}^{n/2} A_i$ must satisfy $|| \mu - \nu ||_1 \ge 2(1-2^{-n/2})$
\end{proof}

\begin{lemma}
To simulate the $\epsilon$-generated order model for any $\epsilon \ge 2^{-n/10}$, the $t$-bounded adversarial order model requires $t > n/2$.
\end{lemma}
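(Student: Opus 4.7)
The plan is to exhibit a distribution $\nu^*$ in the $\epsilon$-generated random-order model (for $\epsilon \ge 2^{-n/10}$) that no $t$-bounded adversary with $t \le n/2$ can produce. The argument is pigeonhole in flavor: a $t$-bounded adversary acting on uniformly random input cannot place too much probability mass on any single output permutation, so I will design $\nu^*$ to concentrate mass on a fixed output heavily enough to violate that ceiling.

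First I would fix notation: a randomized adversary is specified by a family $\{D_\pi\}$ of distributions supported on $t$-bounded permutations $\sigma$, and on input $\pi$ it produces the composition $\pi \circ \sigma$. For any target output $\tau$,
\[
\nu(\tau) \;=\; \frac{1}{n!}\sum_\pi D_\pi(\pi^{-1}\tau) \;\le\; \frac{N_t}{n!},
\]
where $N_t$ is the number of $t$-bounded permutations; the inequality uses that $\pi \mapsto \pi^{-1}\tau$ is a bijection on $S_n$, so only the $N_t$ preimages corresponding to $t$-bounded $\sigma$ contribute any positive weight.

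Next I would count $N_t$ via the Lehmer-code (inversion-table) bijection $\sigma \leftrightarrow (f(1), \ldots, f(n))$ with $f(i) = |\{j < i : \sigma(j) > \sigma(i)\}| \in \{0, \ldots, i-1\}$. The $t$-bounded condition is precisely $f(i) < t$ for every $i$, which gives $N_t = \prod_{i=1}^n \min(i, t) = t! \cdot t^{n-t}$. Since this is monotone in $t$, the extremal case is $t = n/2$; a Stirling estimate on $N_{n/2}/n! = (n/2)!(n/2)^{n/2}/n!$ bounds it by $O((\sqrt{e}/2)^n) = O(2^{-0.278 n})$, comfortably below $2^{-n/10}$ for all sufficiently large $n$ (small $n$ is handled by inspection since $\epsilon$ is then bounded away from $0$).

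Finally, I would construct the witness: pick any permutation $\tau^*$ (the identity is convenient) and set $\nu^* = (1-\epsilon)\mu + \epsilon\, \delta_{\tau^*}$. A routine calculation gives $|| \mu - \nu^* ||_1 = 2\epsilon(1 - 1/n!) \le 2\epsilon$, so $\nu^*$ is $\epsilon$-generated, while $\nu^*(\tau^*) \ge \epsilon \ge 2^{-n/10}$ exceeds the upper bound $N_t/n!$ from the first step and therefore rules $\nu^*$ out of the $t$-bounded adversary's reach. The hard step will be the Stirling calculation verifying that the constant in the exponent of $N_{n/2}/n!$ strictly exceeds $1/10$; once that is pinned down, the remaining pieces assemble immediately.
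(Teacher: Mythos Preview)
Your proof is correct, but it follows a genuinely different route from the paper's. The paper argues via a single \emph{event} rather than counting permutations: it fixes a set $X$ of $r = \beta n$ elements, lets $A$ be the event that some element of $X$ lands among the first $\beta n$ positions, and observes that conditioning $\mu$ on $A$ yields a distribution $\nu$ with $\|\mu-\nu\|_1 \le 2(1-\beta)^r$ and $P_\nu(A)=1$. It then notes that a $t$-bounded adversary cannot move any of the last $(n-t)/2$ input elements into the first $(n-t)/2$ output positions; hence with positive probability over the random input all of $X$ sits in that tail and the adversary is forced to output a stream with $A$ failing, so $P(A)<1$ for every $t$-bounded output distribution. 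Setting $\beta = 1/4$ gives the stated constants.

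Your counting approach is sharper quantitatively: the exact formula $N_t = t!\,t^{n-t}$ together with Stirling gives $N_{n/2}/n! = 2^{-(1-\tfrac{1}{2}\log_2 e+o(1))n} \approx 2^{-0.278n}$, so your argument would in fact establish the lemma for any $\epsilon \ge 2^{-cn}$ with $c < 1 - \tfrac{1}{2}\log_2 e$, not just $c = 1/10$. The paper's proof, on the other hand, is more elementary (no Stirling, no Lehmer-code bijection) and isolates a concrete structural obstruction---the inability of a small buffer to pull late elements far forward---which is arguably more informative about \emph{why} the simulation fails. Both arguments are valid; yours trades a bit of machinery for a tighter constant and a cleaner ``max mass on any permutation'' statement, while the paper's trades tightness for simplicity and an explicit witness event.
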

\begin{proof}
Let $X$ be a subset of $r$ elements.
In random-order, let $A$ be the event at at least one element of $X$ arrives among the first $\beta n$ elements.
$P_\mu(A) \ge 1 - (1-\beta)^r$, so we can create a distribution $\nu$ such that $P_\nu(A) = 1$ and $||\mu - \nu||_1 \le 2(1-\beta)^r$.

In the $t$-bounded adversarial order model, elements are \textit{sent} in random-order but intercepted by an adversary who manipulates the order that elements \textit{arrive} for the algorithm.
Observe that an element among the last $\frac{n-t}{2}$ to be sent cannot be among the first $\frac{n-t}{2}$ to arrive.
If $r \le \frac{n-t}{2}$ then with positive probability all elements of $X$ are among the last $\frac{n-t}{2}$ to be sent.
Therefore if $\beta n \le \frac{n-t}{2}$ we have $P(A) < 1$.
Setting $r = \beta n$, this necessitates that $t > (1 - 2\beta)n$ whenever $\epsilon \ge (1-\beta)^{\beta n}$.
The result follows by setting $\beta = \frac{1}{4}$.
\end{proof} 

Our matching bounds for online facility location show a non-trivial degradation of performance in the $t$-bounded adversarial-order model that smooothly interpolates between random-order and adversarial-order.
This supports the claim that $t$-bounded adversarial-order is a viable model of semi-randomness.
In contrast, it is trivial to show matching bounds of $\Theta(1 +  \frac{\epsilon \log n}{\log \log n})$ in the $\epsilon$-generated model.
More generally, any bound in expectation which is $\Theta(f)$ in random-order and $\Theta(g)$ in adversarial-order implies a $\Theta(f + \epsilon g)$ bound in the $\epsilon$-generated model.
As a result, the $\epsilon$-generated model is not interesting for a wide class of problems.
This class is rather large since any $\Theta(f \log \frac{1}{\delta})$ bound with probability $1-\delta$ implies a $\Theta(f)$ bound in expectation.

We conclude with an open question:
\paragraph{Open Question:} 
Given an adversarial-order bound of $O(f(n))$, all the results in this paper present a bound of $O(f(t))$ for $t$-bounded adversarial order, showing a smooth degradation as $t$ increases.
However, some problems exhibit a sharp phase transition.
For example, the size of the largest component in an Erd\H{o}s-R\'{e}nyi graph $ER(n,p)$ jumps from $O(\log n)$ to $\Omega(n)$ around $p \sim 1/n$.
In the $t$-bounded adversarial order model, is it always the case that bounds degrade smoothly as $t$ increases?  Alternatively, do problems exist that exhibit a sharp jump in some quantity of interest (i.e. time, space, or approximation factor) when $t$ increases by only a constant factor around some value?

%We propose rank selection as a strong candidate to exhibit such a transition.
%It is known that for any $t = O(\sqrt{n})$ a polylogarithmic-space algorithm requires $\Theta(\log \log n)$ passes to find the median of a stream of integers~\cite{GM06, CJP08}.
%Observe that unlike online facility location where the bounds are identical for any $t \ge \sqrt{n}$, for rank selection the bounds are identical for any $t \le \sqrt{n}$.
%Since $\Omega(\frac{\log n}{\log \log n})$ passes are necessary for rank selection on adversarial-order streams~\cite{GM07a}, a result proving tight bounds on rank selection for all values of $t$ would be a significant step towards understanding the $t$-semirandom model.

\bibliographystyle{plain}
\bibliography{bibliography}

\end{document}